\newtheorem{theorem}{Theorem}[section]
\newtheorem{definition}[theorem]{Definition}
\newtheorem{corollary}[theorem]{Corollary}
\newtheorem{proposition}[theorem]{Proposition}
\newtheorem{lemma}[theorem]{Lemma}
\newtheorem{remark}[theorem]{Remark}
\newtheorem{example}[theorem]{Example}
\newcommand{\Z}{{\mathbb{Z}}}
\newcommand{\zero}{{\mathbf{0}}}
\newcommand{\wt}{\operatorname{wt}}
\newcommand{\Aut}{\operatorname{Aut}}
\newcommand{\BH}{\operatorname{BH}}
\newcommand{\GH}{\operatorname{GH}}
\newcommand{\BHFP}{\operatorname{BHFP}}
\title{ Butson  full propelinear codes 
}
\newcommand{\abk}{\allowbreak}
\newcommand{\Sym}{\operatorname{Sym}}
\author{Jos\'e Andr\'es Armario\footnote{\url{E-mail: armario@us.es}}\\
\small Universidad de Sevilla, Sevilla, Spain\\
Ivan Bailera\footnote{\url{E-mail: ivan.bailera@uab.cat}}\\
        \small Universitat Aut\`onoma de Barcelona, Bellaterra, Spain\\
Ronan Egan\footnote{\url{E-mail: ronan.egan@nuigalway.ie}}\\
        \small National University of Ireland Galway, Galway, Ireland
}
\begin{document}

\maketitle
\begin{abstract}
In this paper we study Butson Hadamard matrices, and codes over finite rings coming from these matrices in logarithmic form, called BH-codes. We introduce a new morphism of Butson Hadamard matrices through a generalized Gray map on the matrices in logarithmic form, which is comparable to the morphism given in a recent note of \'{O} Cath\'{a}in and Swartz. That is, we show how, if given a Butson Hadamard matrix over the $k^{\rm th}$ roots of unity, we can construct a larger Butson matrix over the $\ell^{\rm th}$ roots of unity for any $\ell$ dividing $k$, provided that any prime $p$ dividing $k$ also divides $\ell$.

We prove that a $\Z_{p^s}$-additive code with $p$ a prime number is isomorphic as a group to a BH-code over $\Z_{p^s}$ and the image of this BH-code under the Gray map is a BH-code over $\Z_p$ (binary Hadamard code for $p=2$). Further, we investigate the inherent propelinear structure of these codes (and their images) when the Butson matrix is cocyclic. Some structural properties of these codes are studied and  examples  are provided.\\

\noindent \textbf{Keywords:} Cocycles, Butson Hadamard matrices, Gray map, propelinear codes.\\

\noindent \textbf{Mathematics Subject Classification (2010):} $05B20, 05E18, 94B60$. \end{abstract}

\section{Introduction}

Let $n$ and $k$  be positive integers, and $\zeta_k=\exp{(2\pi \sqrt{-1} /k)}$ be the complex $k^{\rm th}$ root of unity. We write $\langle \zeta_{k} \rangle = \{\zeta_{k}^{j}\}_{0 \leq j \leq k-1}$. Let $\Z_k$ be the ring of integers modulo $k$ with $k>1$, and denote by $\Z_k^n$ the set of $n$-tuples over $\Z_k$. We use bold notation ${\bf x} = [x_{1},\ldots,x_{n}] \in \Z_{k}^{n}$ to denote vectors (or codewords) in $\Z_{k}^{n}$. We denote the set of $n \times n$ matrices with entries in a set $X$ by $\mathcal{M}_{n}(X)$.

\subsection{Butson Hadamard matrices}

A \textit{Butson Hadamard (or simply Butson) matrix of order $n$ and phase $k$} is a matrix $H \in \mathcal{M}_{n}(\langle \zeta_{k} \rangle)$ such that $HH^*=nI_n$, where $I_n$ denotes the identity matrix of order $n$ and $H^*$ denotes the conjugate transpose of $H$. We write $\BH(n,k)$ for the set of such matrices. The simplest examples of Butson matrices are the Fourier matrices $F_n=[\zeta_n^{(i-1)(j-1)}]_{i,j=1}^n\in \BH(n,n)$. Hadamard matrices of order $n$, as they are usually defined, are the elements of $\BH(n,2)$. The phase and orthogonality of a $\BH(n,k)$ is preserved by multiplication on the left or right by a $n \times n$ monomial matrix with non-zero entries in the $k^{\rm th}$ roots of unity. For any pair of such monomial matrices $(P,Q)$ the operation defined by $H(P,Q) = PHQ^{*} = H'$ is an equivalence operation, and $H$ and $H'$ are said to be \textit{equivalent}. If $H = H'$, then $(P,Q)$ is an automorphism of $H$.

A Butson matrix $H\in \BH(n,k)$ is conveniently represented in logarithmic form, that is, the matrix $H=[\zeta_k^{\varphi_{i,j}}]_{i,j=1}^n$ is represented by the matrix $L(H)=[\varphi_{i,j}\mod k]_{i,j=1}^n$ with the convention that $L_{i,j}\in\Z_k$ for all $i,j\in\{1,\ldots,n\}.$
\begin{example} \label{formaLog}{\upshape
he following is a $\BH(4,8)$ matrix $H$, display in logarithmic form 
\[
L(H)=\left[\begin{array}{cccc}
   0 & 0 & 0 & 0    \\
   0 & 2 & 4 & 6     \\
   0 & 4 & 0 & 4 \\
    0 & 6 & 4 & 2 
\end{array}
\right]
\]}
\end{example}{\upshape}
Observe that the matrix above is in dephased form, that is, its first row and column are all $0$. Every matrix can be dephased by using equivalence operations. Throughout this paper all matrices are assumed to be dephased.
\begin{example}\label{krofp}{\upshape
Let $p$ be a prime number.  If $L(D)=[xy^{T}]_{x,y\in\Z_p^n} $ then $D$ is a $\BH(p^n,p)$. In fact $D$ is the $n$-fold Kronecker product of the Fourier matrix of order $p$. When $p=2$ this is the well known Sylvester Hadamard matrix of order $2^n$.}
\end{example}

Butson matrices have been subject to a considerable increase in interest recently for a variety of reasons. For one, a $\BH(n,k)$ exists for all $n$, (the Fourier matrix for example), but real Hadamard matrices, i.e., $\BH(n,2)$, exist when $n > 2$ only if $n \equiv 0 \mod 4$, and this condition is famously not yet known to be sufficient. A Butson morphism \cite{morphisms} is a map $\BH(n,k) \rightarrow \BH(m,\ell)$. This motives the study of Butson matrices even if real Hadamard matrices are the primary interest. In Section \ref{sec-Gray} we construct a morphism $\BH(n,k) \rightarrow \BH(nm,k/m)$ where $k = p_{1}^{e_{1}}\cdots p_{t}^{e_{t}}$ and $m = p_{1}^{e_{1}-1}\cdots p_{t}^{e_{t}-1}$, matching the parameters of the morphism discovered by \'{O} Cath\'{a}in and Swartz in \cite{OCS}. But their applications in applied sciences most strongly motivate their study. A $\BH(n,k)$ scaled by a factor of $1/\sqrt{n}$ is an orthonormal basis of $\mathbb{C}^{n}$. In any set of mutually unbiased bases (MUBs) which includes the standard basis, all other bases are necessarily of this form. MUBs have important applications in quantum physics, such as yielding optimal schemes of orthogonal quantum measurement  (see e.g., \cite{BZ}). Butson matrices also have applications in coding theory, as we discuss next.

\subsection{BH-codes and propelinear codes}
Interest in studying codes over finite rings increased significantly after it was proved in \cite{HKCSS94} that
certain notorious non-linear binary codes (such as the Preparata codes or the Kerdock codes), which had some of the properties of linear codes
were, in fact, the images of  codes over $\Z_4$ under a non-linear map (the Gray map). Codes constructed from Butson matrices \cite{GMO06,MW09,PR04,RP05} are a particular type of codes over a finite ring.
A code over $\Z_k$ (or $\Z_k$-code) of length $n$ is a nonempty subset $C$ of $\Z_k^n$. The elements of $C$ are called \textit{codewords}. The Hamming weight of a  vector ${\bf x}\in \Z_k$, denoted by $\wt_H({\bf x})$, is the number of nonzero coordinates of ${\bf x}$. The Hamming distance between two vectors ${\bf x}, {\bf y}\in \Z_k^n$, denoted by $d_{H}({\bf x},{\bf y})=\wt_H({\bf x-y})$, is the number of coordinates in which they differ. Given a minimum Hamming distance $d = \min_{{\bf x},{\bf y} \in C, {\bf x}\neq{\bf y}} d_{H}({\bf x},{\bf y})$ for a code $C$ of length $n$, we say $C$ is a $(n,|C|,d)$ code. Other distances functions are used, for instance, the Lee distance between two vectors ${\bf x}, {\bf y}\in \Z_k^n$ is $d_{L}({\bf x},{\bf y})=\wt_L({\bf x-y})$ where the Lee weight of a vector ${\bf z}=[z_1,\ldots,z_n]\in\Z_k^n$ is $\wt_L({\bf z})=\sum_{i=1}^n \wt_L(z_i)$ with $\wt_L(z_i)=\min \{z_i,k-z_i\}$.

Given $H\in \BH(n,k)$, we denote by $F_H$ the $\Z_k$-code of length $n$ consisting of the rows of $L(H)$, and by $C_H$ the  $\Z_k$-code defined as 
$C_H=\cup_{\alpha \in \Z_{k}}(F_H+\alpha {\bf 1})$ where ${\bf 1}$ denotes the all-one vector (and $\alpha{\bf 1}$ the all-$\alpha$ vector). 
We will write ${\bf 1}_n$ to denote the all-one vector of length $n$ when clarification is required. 
The code $C_H$ over $\Z_k$  is called a \textit{Butson Hadamard code} (briefly, BH-code). 

\begin{example} {\upshape Given $H\in \BH(4,8)$ of Example \ref{formaLog}. Then
$$
F_{H}=\{[0,  0, 0,  0],[0, 2,  4, 6],[0, 4, 0, 4], [0, 6, 4, 2]\},$$
$$C_{H}=\left\{\begin{array}{cccc}
   [0,  0, 0,  0], &     
   [0, 2,  4, 6], &
   [0, 4, 0, 4], &
    [0, 6, 4, 2], \\
    
    [1,  1, 1,  1], &     
   [1, 3,  5, 7], &
   [1, 5, 1, 5], &
    [1, 7, 5, 3], \\
    
    [2,  2, 2,  2], &     
   [2, 4,  6, 0], &
   [2, 6, 2, 6], &
    [2, 0, 6, 4], \\
    
    [3,  3, 3,  3], &     
   [3, 5,  7, 1], &
   [3, 7, 3, 7], &
    [3, 1, 7, 5], \\
    
    [4,  4, 4,  4], &     
   [4, 6,  0, 2], &
   [4, 0, 4, 0], &
    [4, 2, 0, 6], \\
    
    [5,  5, 5,  5], &     
   [5, 7,  1, 3], &
   [5, 1, 5, 1], &
    [5, 3, 1, 7], \\
    
     [6,  6, 6,  6], &     
   [6, 0,  2, 4], &
   [6, 2, 6, 2], &
    [6, 4, 2, 0], \\
    
     [7,  7, 7,  7], &     
   [7, 1,  3, 5], &
   [7, 3, 7, 3], &
    [7, 5, 3, 1]
    \end{array}
\right\}.
$$}
\end{example}

Assuming the Hamming metric, any isometry of $\mathbb{Z}_k^n$ is given by a coordinate permutation $\pi$ and $n$ permutations $\sigma_1,\ldots,\sigma_n$ of $\mathbb{Z}_k$. We denote by $\Aut(\mathbb{Z}_k^n)$ the group of all isometries of $\mathbb{Z}_k^n$:
$$\Aut(\mathbb{Z}_k^n)=\{(\sigma,\pi)\,\colon \,\sigma=(\sigma_1,\ldots,\sigma_n) \,\mbox{with}\, \sigma_i\in\,  \Sym \mathbb{Z}_k,\,\,\,
\pi\in{\cal S}_n\}
$$
where $\Sym \mathbb{Z}_k$ and ${\cal S}_n$ denote, respectively, the symmetric group of permutations on  $\mathbb{Z}_k$ and on the set   $\{1,\ldots,n\}$. The action of $(\sigma, \pi)$ is defined as
$$
(\sigma,\pi)({\bf v})=\sigma(\pi({\bf v})) \quad\mbox{for any} \,\,{\bf v}\in \mathbb{Z}_k^{n},
$$
and the group operation in $\Aut(\mathbb{Z}_k)$ is the composition
$$ 
(\sigma,\pi)\circ (\sigma',\pi')=((\sigma_1\circ \sigma'_{\pi^{-1}(1)},\ldots,\sigma_n\circ \sigma'_{\pi^{-1}(n)}),\pi\circ\pi')
$$
for all $(\sigma,\pi), (\sigma',\pi')\in \Aut(\mathbb{Z}_k)$.

\begin{definition}\label{PropDef}
A code $C$ of length $n$ over $\Z_k$ has a propelinear structure if for any codeword ${\bf x}\in C$ there exist $\pi_{\bf x}\in {\cal S}_n$ and $\sigma_{\bf x}=(\sigma_{{\bf x},1},\ldots,\sigma_{{\bf x},n})$ with $\sigma_{{\bf x},i}\in \Sym \Z_k$ satisfying:
 \begin{itemize}
     \item[(i)] 
     $(\sigma_{\bf x},\pi_{\bf x})(C)=C$ and 
    $(\sigma_{\bf x},\pi_{\bf x})({\bf 0})={\bf x}$,
     \item[(ii)] if ${\bf y}\in C$ and ${\bf z}=(\sigma_{\bf x},\pi_{\bf x})({\bf y})$, then 
     $(\sigma_{\bf z},\pi_{\bf z})=(\sigma_{\bf x},\pi_{\bf x})\circ(\sigma_{\bf y},\pi_{\bf y})$.
 \end{itemize}
 \end{definition}
 The propelinear structure was introduced in \cite{RBH89} for binary codes, and it was generalized in \cite{BMRS13} for $q$-ary codes.
 
For a code $C \subseteq \Z_{k}^{n}$, we denote by $\Aut(C)$ the group of all isometries of $\Z_{k}^{n}$ fixing the code $C$ and we call it the \textit{automorphism group} of the code $C$. A code $C$ over $\Z_k$ is called {\textit transitive} if $\Aut(C)$ acts transitively on its codewords, i.e., the code satisfies the property (i) of the above definition. 

Assuming that $C$ has a propelinear structure then a binary operation 
$\star$ can be defined as
$${\bf x}\star {\bf y}=(\sigma_{\bf x}, \pi_{\bf x})({\bf y})\quad \mbox{for any ${\bf x},{\bf y}\in C.$}$$
Therefore, $(C,\star)$ is a group, which is not abelian in general. This group structure is compatible with the Hamming distance, that is, $d_{H}({\bf x}\star {\bf u}, {\bf x}\star {\bf v})=d_{H}({\bf u},{\bf v})$ where ${\bf u},{\bf v}\in \Z_{k}^{n}$. 
The vector ${\bf 0}$ is always a codeword where $\pi_{\bf 0}=Id_n$ is the identity coordinate permutation and $\sigma_{{\bf 0},i }=Id_k$ is the identity permutation on $\Z_k$ for all $i\in\{1,\ldots,n\}$.
Hence, ${\bf 0}$ is the identity element in $C$ and 
$\pi_{{\bf x}^{-1}}=\pi_{\bf x}^{-1}$ and $\sigma_{{\bf x}^{-1},i}=\sigma^{-1}_{{\bf x},\pi_{\bf x}(i)}$
for all ${\bf x}\in C$ and for all $i\in\{1,\ldots,n\}$. We call $(C,\star)$ a \textit{propelinear code}. Henceforth we use $C$ instead of $(C,\star)$ if there is no confusion.

\begin{definition}\label{FullPropDef}
A \textit{full propelinear code} is a propelinear code $C$ such that for every ${\bf a} \in C$, $\sigma_{\bf a}({\bf x})={\bf a}+{\bf x}$ and $\pi_{\bf a}$ has not any fixed coordinate when ${\bf a}\neq \alpha {\bf 1}$ for $\alpha\in \Z_k$. Otherwise, $\pi_{\bf a}=Id_n$.
\end{definition}

\begin{remark}
{\upshape Every linear code is propelinear but not necessarily full.}
\end{remark}

A Butson Hadamard code, which is also full propelinear, is called a \textit{Butson Hadamard full propelinear code} (briefly, $\BHFP$-code). In the binary case, we have the Hadamard full propelinear codes, they were introduced in \cite{RS18} and their equivalence with Hadamard groups was proven. In the $q$-ary case, i.e., codes over the finite field $\mathbb{F}_{q}$ where $q$ is a prime power, the generalized Hadamard full propelinear codes were introduced in \cite{ABE19}. Their existence is shown to be equivalent to the existence of central relative $(n,q,n,n/q)$.
 
Propelinear codes are a topic of increasing interest in algebraic coding theory. The primary reason for this is that they offer one of the main benefits of linear codes, which is that they can be entirely described by a few generating codewords and group relations. However as the codes are not necessarily linear, they are not subject to all of the same minimum distance constraints as linear codes with the same number of codewords. Some propelinear codes may outperform comparable linear codes by having a larger minimum distance that any linear code of the same size, or by having more codewords than any linear code with a given minimum distance \cite{ABE19,HKCSS94}. In this paper we extend the work of the authors in \cite{ABE19} and describe the connection between cocyclic Butson Hadamard matrices and BHFP-codes.

\section{Constructing Butson Hadamard matrices and related codes}

Throughout this paper we study BH-codes over $\Z_{k}$. 
We have already introduced the Lee and Hamming distance between vectors ${\bf x}$ and ${\bf y}$.
We define other useful distance functions here. Initially, let $k = p^s$ for a prime $p$. The weight function $\wt^*(x)$ with $x\in\Z_{p^s}$ is defined by 
        $$\wt^*(x)=\left\{
        \begin{array}{ll}
             (p-1)p^{s-2} & x\neq kp^{s-1} \mod p^s,\,k\in\Z_p \\
        p^{s-1} & x= kp^{s-1}  \mod p^s, k\in\Z_p\setminus\{0\}\\
        0 & x=0 \mod p^s 
        \end{array}\right.
        $$
        
For $p=s=2$, this is the Lee weight. The corresponding distance $d^*$ on $\Z_{p^s}^n$ is defined as follows:
\begin{equation}\label{d-star}
d^*({\bf x},{\bf y})=\sum_{i=1}^n \wt^*(y_i-x_i),
\end{equation}
where ${\bf x}=[x_1,\ldots,x_n]$ and 
${\bf y}=[y_1,\ldots,y_n]$ in $\Z_{p^s}^n$. More generally, let $k = mp^{s}$ for $m$ coprime to $p$. Any $x \in \Z_{k}$ may be written uniquely in the form $x = ap^{s} + bm \mod k$ where $0 \leq a \leq m-1$ and $0 \leq b \leq p^{s}-1$.
Define the weight function $\wt^{\dag}(x)$ on $\Z_{k}$ by 
        $$\wt^{\dag}(x)=\left\{
        \begin{array}{ll}
             \wt^{*}(b) & a = 0 \\
        p^{s-1} & a \neq 0.
        \end{array}\right.
        $$

The definition of the weight function here is consistent with the homogeneous metric introduced in \cite{CH97}.
The corresponding distance $d^{\dag}$ on $\Z_{mp^s}^n$ is defined as follows:
\begin{equation}\label{d-dag}
d^{\dag}({\bf x},{\bf y})=\sum_{i=1}^n \wt^{\dag}(y_i-x_i),
\end{equation}
where ${\bf x}=[x_1,\ldots,x_n]$ and 
${\bf y}=[y_1,\ldots,y_n]$ in $\Z_{mp^s}^n$. 

Given $H\in \BH(n,k)$, recall that $F_H$ is the $\Z_k$-code of length $n$ consisting of the rows of $L(H)$, and $C_H=\cup_{\alpha\in \Z_k}(F_H+\alpha {\bf 1})$. Let  $r_H^i(l)$ be the number of repetitions of $l\in \Z_{k}$  in the i-$th$ row of $L(H)$ and $r_H(l)= \displaystyle\max_{2\leq i\leq n} r_H^i (l)$.

For $k=p^s$, Lemma 3.1 of \cite{LOS20} gives a pattern that any row of $L(H)$ has to follow. That is, any row ${\bf x}$ has to be a permutation of the vector $({\bf u},r_1 {\bf 1}+{\bf u},\ldots,r_{t-1}{\bf 1}+{\bf u})$ where ${\bf u}=[0,p^{s-1}, 2p^{s-1},\ldots,,(p-1)p^{s-1}]$, $r_i\in \Z_{p^{s-1}},$ for $1\leq i\leq t-1$ with 
$t=\frac{n}{p}$. Therefore,
$$r_H(l)\leq\left\{ 
\begin{array}{cl}
     \frac{n}{p} & l=h p^{s-1}\,\mbox{where $h\in\Z_p$} \\
     \frac{n}{p}-1& \mbox{Otherwise.} 
\end{array}
\right.$$
As a consequence, $n-\frac{n}{p}$ is an upper bound for the minimum Hamming distance of $F_H$ when $k=p^s$. Furthermore, the minimum Hamming distance of both codes, $F_H$ and $C_H$, is the same in this case. 

In \cite{PR04,RP05}, the authors prove that if $n=p^{sm}$ and $k=p^s$ then the {
minimum} Hamming  distance of $F_H$ is   $n-\frac{n}{p}$ and the {
minimum} Lee distance is given by
$$d_L=\left\{\begin{array}{cc}
          2^{m+s-2},   & p=2  \\
          \frac{p^{s(m+1)-2}}{4}
    (p^2-1),   & p>2 \,\,\,\mbox{prime;}
        \end{array}
            \right.$$
where $H$ is the Butson matrix of Theorem \ref{conbugac} and $m=t_1-1$ for $t_1>0$ and $t_2=\ldots=t_s=0$.

Finally, Theorem 5.4 of \cite{GMO06} claims that for any pair $(n,k)$ such that $\BH(n,k)\neq \emptyset$, if $H\in \BH(n,k)$ then the code obtained by deleting the first coordinate in $F_H$ has parameters $(n-1,n,\gamma n)$ meeting the Plotkin bound over Frobenius rings where $\gamma$ is the average homogeneous weight over $\Z_k$.

\subsection{A Fourier type construction and simplex codes}
In what follows, we describe a method to construct Butson matrices of order $n=p^{st_1+(s-1)t_2+\ldots + t_s-s} $ and phase $k=p^s$, where $p$ is a prime. 
Let $s$ be a positive integer, $t_1,t_2,\ldots, t_s$ be nonnegative integers with $t_1\geq 1$, and $A^{1,0,\ldots,0}=[0]$. The matrix 
$A^{t_1,t_2,\ldots, t_s}$, where ${\bf p^{i-1}}$ denotes the all-$p^{i-1}$ vector, is defined recursively according to the following algorithm, where initially, $(t_1',t_2',\ldots,t_s')=(1,0,\ldots,0)$.\newline

\begin{algorithmic}
\For{i=1 until s}
   \While{ $t_i'<t_i$}
     \State{$A\leftarrow A^{t_1',\ldots,t_s'}$}
      \State{$t_i'\leftarrow t_i'+1$}
      \State{$A^{t_1',\ldots,t_s'}\leftarrow A_i=\left[\begin{array}{cccc}
          A & A & \ldots & A \\
         0\cdot {\bf p^{i-1}} & 1\cdot {\bf p^{i-1}} & \ldots &  
         (p^{s-i+1}-1)\cdot {\bf p^{i-1}}
     \end{array}\right]$}
    \EndWhile
\EndFor    \newline
\end{algorithmic}

By construction, it is clear that $A^{t_1,t_2,\ldots, t_s}$ is a $(t_1+t_2+\ldots+t_s)\times (p^{st_1+(s-1)t_2+\ldots + t_s-s})$ matrix. This is a generalization of the construction of   \cite{PR04} as we will point out in Corollary \ref{Ashannew}. 

\begin{example}{\upshape
For $p=2$ and $s=3$. We have $A^{1,1,0}=\left[\begin{array}{cccc}
     0 & 0 & 0 & 0  \\
     0 & 2 & 4 & 6
    \end{array}
\right]$ and $A^{1,1,1}=\left[\begin{array}{cccccccc}
     0 & 0 & 0 & 0 & 0 & 0 & 0 & 0\\
     0 & 2 & 4 & 6 &  0 & 2 & 4 & 6\\
      0 & 0 & 0 & 0 & 4 & 4 & 4 & 4
    \end{array}
\right].$}

\end{example}

Given ${\bf x} \in \Z_{k}^{n}$, the \textit{order} of ${\bf x}$ is the smallest positive integer $m$ such that $m{\bf x} = {\bf 0}$ over $\Z_{k}$.

\begin{lemma}\label{lemmaayudaconsbut}
Let  $k=p^s$ and  ${\bf u_i}=[\begin{array}{cccc}
   0\cdot {\bf p^{i-1}}, & 1\cdot {\bf p^{i-1}}, & \ldots, &  
         (p^{s-i+1}-1)\cdot {\bf p^{i-1}}  
\end{array}]\in \Z_{p^s}^{p^{s-i+1}}$ where $1\leq i\leq s$. Then
\begin{itemize}
\item $\displaystyle\sum_{j=0}^{p^{s-i+1}-1}\zeta_k^{jp^{i-1}}=0$, for all $1 \leq i \leq s$.
\item The order of ${\bf u_i}$ is $p^{s-i+1}$.
\item If $\gcd(m,p^{s-i+1})=1$ then  $[m{\bf u_i}]_j=[{\bf u_i}]_{\pi(j)}$ where $\pi\in{\cal S}_{p^{s-i+1}}$.
\item If $\gcd(m,p^{s-i+1})=p^l$ then  $[m{\bf u_i}]_j=[\frac{m}{p^l} {\bf u_{i+l}}]_{\pi(j\mod p^{s-(i+l)+1})}$ where $\pi\in{\cal S}_{p^{s-(i+l)+1}}$.  
\end{itemize}
\end{lemma}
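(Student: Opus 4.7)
The four bullets are essentially elementary facts about the cyclic group $\Z_{p^s}$, so my plan is to handle each one by direct computation and note that (3) and (4) share the same underlying mechanism.

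For the first bullet, I would observe that $\zeta_k^{p^{i-1}} = \zeta_{p^s}^{p^{i-1}}$ is a primitive $p^{s-i+1}$-th root of unity, so $\sum_{j=0}^{p^{s-i+1}-1}\zeta_k^{jp^{i-1}}$ is the sum of all $p^{s-i+1}$-th roots of unity and hence vanishes. For the second bullet, the $j=1$ entry $p^{i-1}$ of ${\bf u_i}$ has order exactly $p^{s-i+1}$ in $\Z_{p^s}$, which forces the order of ${\bf u_i}$ to be a multiple of $p^{s-i+1}$; conversely, $p^{s-i+1}\cdot j \cdot p^{i-1} = j p^s \equiv 0 \pmod{p^s}$ for every $j$, so the order is exactly $p^{s-i+1}$.

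For the third bullet, I would write, for $0 \le j < p^{s-i+1}$,
\[
[m{\bf u_i}]_j = m j p^{i-1} \bmod p^s = (mj \bmod p^{s-i+1})\cdot p^{i-1},
\]
using the fact that if $mj = q p^{s-i+1} + r$ then $mj p^{i-1} = qp^s + rp^{i-1} \equiv rp^{i-1}\pmod{p^s}$. Since $\gcd(m,p^{s-i+1})=1$, the map $j \mapsto mj \bmod p^{s-i+1}$ is a permutation $\pi$ of $\{0,\ldots,p^{s-i+1}-1\}$, and so $[m{\bf u_i}]_j = [{\bf u_i}]_{\pi(j)}$.

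For the fourth bullet, write $m = p^l m'$ with $\gcd(m',p)=1$, so in particular $\gcd(m', p^{s-(i+l)+1}) = 1$. Then
\[
[m{\bf u_i}]_j = m' j\, p^{i+l-1} \bmod p^s = \bigl(m'(j\bmod p^{s-(i+l)+1}) \bmod p^{s-(i+l)+1}\bigr)\,p^{i+l-1},
\]
by the same reduction as above, since $p^{i+l-1}\cdot p^{s-(i+l)+1} = p^s$. On the other hand, ${\bf u_{i+l}}$ has length $p^{s-(i+l)+1}$ and entries $j' p^{i+l-1}$, so the right-hand side is exactly $[m'{\bf u_{i+l}}]_{j\bmod p^{s-(i+l)+1}}$, and then applying bullet (3) to $m'{\bf u_{i+l}}$ produces the promised permutation $\pi \in \mathcal S_{p^{s-(i+l)+1}}$.

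The only place with any real potential for confusion is bullet (4): one must keep careful track of which modulus is used when reducing indices, and check that the resulting indexing matches the length $p^{s-(i+l)+1}$ of ${\bf u_{i+l}}$ rather than the original length $p^{s-i+1}$ (the vector $m{\bf u_i}$ simply consists of $p^l$ copies of a permuted $m'{\bf u_{i+l}}$). Once this bookkeeping is fixed, the statement reduces to the bijectivity result of bullet (3).
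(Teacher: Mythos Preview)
Your proposal is correct and matches the paper's approach: the paper's proof consists of the single sentence ``It is a straightforward comprobation,'' so your direct verification of each bullet is exactly what the authors had in mind, only spelled out in full. One minor remark: in bullet~(4) you already obtain $[m{\bf u_i}]_j = [m'{\bf u_{i+l}}]_{j \bmod p^{s-(i+l)+1}}$, which is the claimed identity with $\pi$ the identity permutation, so the further appeal to bullet~(3) is unnecessary (though not incorrect).
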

\begin{proof}
It is a straightforward comprobation. 
\end{proof}

\begin{theorem}\label{conbugac}
Let $n=p^{st_1+(s-1)t_2+\ldots + t_s-s} $ and $L(H)$ be the $n\times n$ matrix whose rows are  the $n$ possible linear combinations (with coefficients in $\Z_{p^s}$) of the rows of $A^{t_1,t_2,\ldots, t_s}$. Then,
$H\in \BH(n,p^s)$.
\end{theorem}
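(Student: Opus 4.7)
The plan is to prove by induction on the recursive construction of $A = A^{t_1, \ldots, t_s}$ the two claims: (i) the $\Z_{p^s}$-span $R(A)$ of the rows of $A$ has exactly $n$ elements, and (ii) for every nonzero $\mathbf{v} \in R(A)$ one has $\sum_{j=1}^{n} \zeta_{p^s}^{v_j} = 0$. Once these are established the theorem follows at once: the rows of $L(H)$ are indexed by $R(A)$, so $L(H)$ is $n \times n$, and for distinct rows $\mathbf{x}, \mathbf{y}$ the difference $\mathbf{x} - \mathbf{y}$ is a nonzero element of $R(A)$, hence $\sum_j \zeta_{p^s}^{x_j - y_j} = 0$. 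This gives $HH^* = nI_n$.

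For the inductive step, suppose $A' = A^{t_1', \ldots, t_s'}$ (with width $n'$) satisfies (i) and (ii), and let $A_i$ be obtained from $A'$ in one phase-$i$ iteration of the algorithm, so $A_i$ has width $n' p^{s-i+1}$. Any element $\mathbf{v} \in R(A_i)$ consists of $p^{s-i+1}$ consecutive blocks of length $n'$, where the $h$-th block equals $\mathbf{w} + c_0 \, h \, p^{i-1} \mathbf{1}_{n'}$ for some $\mathbf{w} \in R(A')$ and $c_0 \in \Z_{p^s}$ (the coefficient of the newly added bottom row). Using the identity $\zeta_{p^s}^{p^{i-1}} = \zeta_{p^{s-i+1}}$, the character sum factorises as
$$\sum_{j=1}^{n' p^{s-i+1}} \zeta_{p^s}^{v_j} = \Bigl( \sum_{h=0}^{p^{s-i+1}-1} \zeta_{p^{s-i+1}}^{c_0 h} \Bigr) \Bigl( \sum_{j'=1}^{n'} \zeta_{p^s}^{w_{j'}} \Bigr),$$
and $\mathbf{v} = \mathbf{0}$ if and only if $\mathbf{w} = \mathbf{0}$ and $p^{s-i+1} \mid c_0$. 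The latter observation gives $|R(A_i)| = p^{s-i+1} \cdot n'$, confirming (i). If instead $\mathbf{v} \neq \mathbf{0}$, then either $\mathbf{w} \neq \mathbf{0}$ (killing the right-hand factor by the inductive hypothesis) or $\mathbf{w} = \mathbf{0}$ with $c_0 \not\equiv 0 \pmod{p^{s-i+1}}$ (killing the left-hand factor as a nontrivial character sum on $\Z_{p^{s-i+1}}$, as noted in Lemma \ref{lemmaayudaconsbut}), confirming (ii).

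The base case is $A^{1, 0, \ldots, 0} = [0]$, for which $R = \{[0]\}$ has size $1$ and (ii) is vacuous. Counting the iterations: beginning with $1$ and multiplying by $p^{s-i+1}$ at each iteration of phase $i$ (of which there are $t_1 - 1$ at phase $1$ and $t_i$ at each phase $2 \leq i \leq s$) gives the final cardinality $p^{s(t_1 - 1) + (s-1)t_2 + \ldots + t_s} = n$, as required. The delicate point is the clean factorisation of the character sum into two independent factors, which depends precisely on the block structure of the newly added row aligning with the $p^{s-i+1}$ repeated copies of $A'$ in $A_i$; everything else is bookkeeping.
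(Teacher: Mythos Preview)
Your proof is correct and follows essentially the same strategy as the paper: both reduce orthogonality of $H$ to showing that the character sum $\sum_{j}\zeta_{p^s}^{v_j}$ vanishes for every nonzero row $\mathbf{v}$ of $L(H)$, and both handle this by an induction tied to the recursive construction of $A^{t_1,\ldots,t_s}$. Your version is considerably more explicit than the paper's---where the paper dispatches the general linear combination case with the phrase ``simple induction'', you supply the key factorisation of the character sum over $A_i$ into a product of a character sum on $\Z_{p^{s-i+1}}$ and a character sum over $R(A')$, and you also verify that $|R(A)| = n$, which the paper's statement simply assumes.
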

\begin{proof}
By construction, the difference between two distinct rows of $L(H)$ is a linear combination (with coefficients in  $\Z_{p^s}$) of the rows of $A^{t_1,t_2,\ldots, t_s}$. Hence, it is a row of $L(H)$.

Therefore, proving $HH^*=nI_n$ reduces to proving that every row sum  of $H$ is $0$. For the rows of $H$ corresponding to multiples of  the rows of $A^{t_1,t_2,\ldots, t_s}$, this holds as a consequence of Lemma \ref{lemmaayudaconsbut}. Finally, the proof for the rows of $H$ corresponding to a linear combination of the rows of $A^{t_1,t_2,\ldots, t_s}$ is by a simple induction. 
\end{proof}

We provide some examples of Butson matrices coming from Theorem \ref{conbugac}.
\begin{example} \label{ejemplosbhadd} {\upshape
Let $p=2$ and $s=3$. For $t_1=1,\,t_2=1,\,t_3=0$ then $L(H)$ is the matrix given in Example \ref{formaLog}. For $t_1=1,\,t_2=1,\,t_3=1$ then $H\in \BH(8,8)$ where
\[
L(H)=\left[
\begin{array}{cccccccc}
     0 & 0 & 0 & 0 & 0 & 0 & 0 & 0\\
     0 & 2 & 4 & 6 &  0 & 2 & 4 & 6\\
      0 & 4 & 0 & 4 &  0 & 4 & 0 & 4\\
       0 & 6 & 4 & 2 & 0 & 6 & 4 & 2 \\
            0 & 0 & 0 & 0 & 4 & 4 & 4 & 4\\
         0 & 2 & 4 & 6 & 4 & 6 & 0 & 2  \\
         0 & 4 & 0 & 4 &  4 & 0 & 4 & 0\\
        0 & 6 & 4 & 2 & 4 & 2 & 0 & 6   
         
\end{array}
\right]
\]}
\end{example}
\begin{remark}{\upshape
Let $L(H)$ be the matrix of Example \ref{ejemplosbhadd} for $t_1=1,\,t_2=1,\,t_3=1$. Then
$L(H)=L(F_2\otimes F_4) $ where we have used that $F_2\otimes F_4\in \BH(8,8)$ by means of $\zeta_2=\zeta_8^4$ and $\zeta_4=\zeta_8^2.$}
\end{remark}

In general we have the following.
\begin{proposition}\label{adpkfm}
Let $n=p^{st_1+(s-1)t_2+\ldots + t_s-s} $ and $L(H)$ be the $n\times n$ matrix of Theorem \ref{conbugac}. Then, $H$ is equivalent to 
\[
(F_{p})^{t_s}\,\otimes \,(F_{p^2})^{t_{s-1}}\,\otimes\,\ldots\,\otimes \,(F_{p^{s-1}})^{t_2}\,\otimes\,(F_{p^s})^{t_1-1}
\]
where 
$F_{p^{s-j}}$ denotes  the Fourier matrix of order $p^{s-j}$ embedded in $\BH(p^{s-j},p^s)$ using that $\zeta_{p^{s-j}}=\zeta_{p^s}^{p^j}$, and $(M)^r$ denotes the $r$-fold Kronecker  product of the matrix $M$.
\end{proposition}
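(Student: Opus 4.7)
The plan is to induct on the total number of iterations of the inner loop of the algorithm, namely $N = (t_1 - 1) + t_2 + \cdots + t_s$, showing at each stage that the matrix $H$ produced from $A^{t_1',\ldots,t_s'}$ is equivalent to the partial Kronecker product accumulated so far. The base case $N = 0$ corresponds to $A^{1,0,\ldots,0} = [0]$, yielding the $1 \times 1$ matrix $H = [1]$, which matches the empty Kronecker product.

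For the inductive step, I would write $A' := A^{t_1',\ldots,t_s'}$ with associated Butson matrix $H'$ of order $n'$. When the algorithm increments $t_i'$ to $t_i' + 1$, the new $A$ has $n' p^{s-i+1}$ columns, consisting of the horizontally $p^{s-i+1}$-fold repeated previous rows together with the new row $\mathbf{v}_i := [0 \cdot p^{i-1} {\bf 1}_{n'},\, 1 \cdot p^{i-1} {\bf 1}_{n'},\, \ldots,\, (p^{s-i+1}-1) p^{i-1} {\bf 1}_{n'}]$. Any $\Z_{p^s}$-linear combination of the rows of $A$ is uniquely of the form $[\mathbf{r}', \mathbf{r}', \ldots, \mathbf{r}'] + c\, \mathbf{v}_i$, where $\mathbf{r}'$ ranges over the rows of $L(H')$ and $c \in \Z_{p^{s-i+1}}$ (by Lemma \ref{lemmaayudaconsbut}, $c \mathbf{u}_i$ depends only on $c \bmod p^{s-i+1}$, so allowing $c \in \Z_{p^s}$ produces no new rows); thus $H$ has $n' \cdot p^{s-i+1} = n$ distinct rows, as required.

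The central identification is with $H' \otimes F_{p^{s-i+1}}$, where $F_{p^{s-i+1}}$ is embedded in $\BH(p^{s-i+1}, p^s)$ via $\zeta_{p^{s-i+1}} = \zeta_{p^s}^{p^{i-1}}$. Under that embedding, the $c$-th row of $L(F_{p^{s-i+1}})$ is precisely $c\, \mathbf{u}_i$. Ordering the columns of the Kronecker product so that the $F_{p^{s-i+1}}$ index varies slowest, the $(\mathbf{r}', c)$-th row of $L(H' \otimes F_{p^{s-i+1}})$ reads $[\mathbf{r}',\, \mathbf{r}' + c p^{i-1} {\bf 1}_{n'},\, \ldots,\, \mathbf{r}' + (p^{s-i+1}-1)c p^{i-1} {\bf 1}_{n'}] = [\mathbf{r}', \ldots, \mathbf{r}'] + c\, \mathbf{v}_i$, matching the decomposition above. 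A column permutation, which is a Butson equivalence operation, therefore carries $H' \otimes F_{p^{s-i+1}}$ onto $H$.

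Iterating through all steps of the algorithm produces an accumulated Kronecker product containing $t_1 - 1$ copies of $F_{p^s}$ (from the $i = 1$ iterations) and $t_i$ copies of $F_{p^{s-i+1}}$ for each $i = 2, \ldots, s$. Since for any two factors $A \otimes B$ is equivalent to $B \otimes A$ via the perfect-shuffle permutation (which is a monomial, hence Butson, equivalence), the factors of the accumulated product may be rearranged into the specific order stated in the proposition. I expect the main technical care to lie in the bookkeeping of the column ordering used to identify $[\mathbf{r}', \ldots, \mathbf{r}'] + c\, \mathbf{v}_i$ with a row of the Kronecker product; everything else is routine indexing and induction.
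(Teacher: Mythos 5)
Your proof is correct and takes essentially the same route as the paper's: an induction over the iterations of the algorithm, identifying each increment of $t_i'$ with tensoring the previously built matrix by the Fourier matrix $F_{p^{s-i+1}}$ embedded via $\zeta_{p^{s-i+1}}=\zeta_{p^s}^{p^{i-1}}$. You merely make explicit the column-ordering and row-distinctness bookkeeping that the paper's proof leaves implicit.
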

\begin{proof}
The proof is by induction. The case $t_{1},t_{2},\ldots, t_{s} = 1,0,\ldots,0$ is trivial, so consider the case $t_1=2$ and $t_2=\ldots=t_s=0$. It is clear that
$L(H)=L(F_{p^s})$ since $A^{2,0\ldots,0} = \left[\begin{array}{cccc}
     0 & 0 & \cdots & 0  \\
     0 & 1 & \cdots & p^{s}-1
    \end{array}
\right]$. For the next step of the induction, we  assume that
 $t_{i+1}=\ldots =t_s=0$ and $L(H)=L( (F_{p^{s-(i-1)}})^{t_{i}'}\,\otimes\,(F_{p^{s-(i-1)}})^{t_{i-1}}\,\otimes\,\ldots\,\otimes \,(F_{p^{s-1}})^{t_2}\,\otimes\,(F_{p^s})^{t_1-1})$. Now,  we have to distinguish two possibilities:
\begin{itemize}
    \item $t_i'<t_i$; then let $t_{i}' \leftarrow t_{i}'+1$ and $t_{i+1}=\ldots =t_s=0.$ All the possible linear combinations of the rows of $A^{t_1,\ldots,t_{i-1},t_i'+1,0\ldots,0}$ are the rows of $B=L(F_{p^{s-(i-1)}}\otimes H)$. 
    \item $t_i'=t_i$; then take $t_{i+1}=1$ with $t_{i+2}=\ldots =t_s=0.$ Proceeding in a similar way, the result holds.
\end{itemize}
\end{proof}

It is clear now that this construction is not new, in the sense that it does not produce any Butson matrices not already known. However this perspective gives us new insights into the related BH-codes. 
\begin{remark}
{\upshape For $t_1\neq 0$, $t_2=\ldots=t_s=0$  and $p=2$, the code generated with the rows of $A^{t_1,0,\ldots,0}$ is  a $\Z_{p^s}$-simplex code of type $\alpha$ (see \cite[Definition 4.1]{PR04}).  Furthermore, this code is self-orthogonal if $s=2$.}
\end{remark}

\begin{corollary}\label{Ashannew}
 A simplex code of type $\alpha$ over $\Z_{2^s}$ of length $2^{sm}$ (see \cite{PR04}) and the code whose codewords are the rows of $ L((F_{2^s})^m)$ 
 are the same.  Therefore $M_\psi$, the cocyclic $\BH(2^{sm},2^s)$ of \cite[Theorem 5.1, ii)]{PR04} is equivalent to $(F_{2^s})^m$. Similarly, when $p>2$ prime, the analogous classifying result for the cocyclic $\BH(p^{sm},p^s)$ of \cite[Proposition 3.1, ii)]{RP05} holds.
\end{corollary}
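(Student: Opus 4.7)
The plan is to invoke Proposition \ref{adpkfm} and the preceding Remark with the special choice $t_1 = m+1$ and $t_2 = \cdots = t_s = 0$. Then the exponent $s t_1 + (s-1)t_2 + \cdots + t_s - s$ reduces to $sm$, and Theorem \ref{conbugac} produces a matrix $H \in \BH(p^{sm}, p^s)$ whose row log-code $F_H$ is by construction the $\Z_{p^s}$-span of the $m+1$ rows of $A^{m+1,0,\ldots,0}$. The Remark preceding the corollary identifies this span, for $p=2$, with the $\Z_{2^s}$-simplex code of type $\alpha$ of length $2^{sm}$ from \cite[Definition 4.1]{PR04}. Simultaneously, Proposition \ref{adpkfm} with the same parameters yields $H \sim (F_{p^s})^{t_1-1} = (F_{p^s})^m$, so the row log-code of $(F_{2^s})^m$ agrees with the simplex code up to the operations of Butson equivalence, which is the first assertion.

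The second assertion, $M_\psi \sim (F_{2^s})^m$, follows because $M_\psi$ is constructed in \cite[Theorem 5.1, ii)]{PR04} precisely so that its rows in log form comprise the simplex code of type $\alpha$. Thus $M_\psi$ and $(F_{2^s})^m$ are two cocyclic $\BH(2^{sm}, 2^s)$ matrices whose row log-codes coincide, and the identifications realizing this coincidence are obtained by row/column permutations together with additive row/column shifts by elements of $\Z_{2^s}$. These operations lift to pre- and post-multiplication by monomial matrices with nonzero entries in $\langle \zeta_{2^s} \rangle$, which is the Butson equivalence group. Hence $M_\psi$ and $(F_{2^s})^m$ are equivalent as Butson matrices. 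The case $p > 2$ is handled identically, using the $\Z_{p^s}$-simplex code of type $\alpha$ from \cite{RP05} and citing \cite[Proposition 3.1, ii)]{RP05} in place of \cite[Theorem 5.1, ii)]{PR04}.

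The one step that requires any care is the passage from equality of row log-codes to Butson equivalence of the underlying matrices. I would verify that the identifications produced by both Proposition \ref{adpkfm} (via Kronecker-product rearrangements, inherently only permutations of rows and columns) and by the cocyclic construction of $M_\psi$ (which by definition only uses row/column monomial rescalings) lie entirely in the Butson equivalence group, so that equality of row codes under such relabelings suffices to conclude equivalence of the matrices. Since this is true by inspection of each construction, no additional ingredient beyond bookkeeping is needed.
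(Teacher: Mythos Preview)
Your argument is correct and follows essentially the same route as the paper: you specialize to $t_{1}=m+1$, $t_{2}=\cdots=t_{s}=0$, invoke the Remark to identify $F_{H}$ with the simplex code of type $\alpha$, and then apply Proposition~\ref{adpkfm} to get $H \sim (F_{p^{s}})^{m}$. The paper's proof does exactly this in two sentences; your version simply spells out the second assertion (the equivalence $M_{\psi}\sim (F_{2^{s}})^{m}$) more carefully, whereas the paper absorbs it into ``the results follows.'' One minor point: the corollary asserts the two codes \emph{are the same}, not merely equivalent, so it would be slightly cleaner to note that the equivalence produced in the proof of Proposition~\ref{adpkfm} is in fact only a row reordering (the rows of $L(H)$ are literally the rows of $L((F_{p^{s}})^{m})$ as a set), making the row log-codes identical rather than just Butson-equivalent.
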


\begin{proof}
Attending to the Remark above, a simplex code of type $\alpha$ over $\Z_{2^s}$ of length $n=2^{st_1}$ is exactly the code $F_H$ where $H$ is the $n\times n$ matrix of  Theorem \ref{conbugac}. Applying Proposition \ref{adpkfm}, the results follows.
\end{proof}

The classifying result above follows also as a consequence of \cite[Theorem 13]{MW09}.




A nonempty subset ${\cal C}$ of $\Z_{p^s}^n$ is \textit{a $\Z_{p^s}$-additive code} if it is a subgroup of $\Z_{p^s}^n$ (i.e., a $\Z_{p^s}$-module). Clearly, given \textit{a $\Z_{p^s}$-additive code}, ${\cal C}$, of length $n$ there exist some non negative integers $t_1,\ldots,t_s$ such that ${\cal C}$ is isomorphic (as an abelian group) to $\Z_{p^s}^{t_1}\times \Z_{p^{s-1}}^{t_2}\times \ldots\times \Z_p^{t_s}$. Thus,  ${\cal C}$ is said to be of type $(n;t_1,\ldots,t_s)$.  Note that $|{\cal C}|=p^{st_1}p^{(s-1)t_2}\ldots p^{t_s}$ since there are $t_1$ (generators) codewords of order $p^s$, $t_2$ of order $p^{s-1}$ and so on.

\begin{remark}\label{Defofaddcodes}{\upshape Let $t_1,\ldots,t_s$ be non negative integers and
taking $A^{1,0,\ldots,0}=[1]$ instead of $[0]$, the method described at the beginning of this section provides $A^{t_1,t_2,\ldots, t_s}$ as a generator matrix for a $\Z_{p^s}$-additive code of type $(n;t_1,\ldots,t_s)$ where $n=p^{st_1+(s-1)t_2+\ldots + t_s-s}$. The description of 
recursive constructions of these matrices are in \cite{FVV19,Kro01,Kro07} for $p=2$. The case $p\neq 2$ has been studied in \cite{SWK19}.
We will denote the codes associated to these matrices by ${\cal H}^{t_1,\ldots,t_s}$. Let us point out that ${\cal H}^{0,t_2,\ldots,t_s}\subset {\cal H}^{1,t_2,\ldots,t_s}$.}
\end{remark}

Now, we establish  the following result.

\begin{proposition}\label{ACareBH} For $t_1>0$,
every ${\cal H}^{t_1,\ldots,t_s}$ is a $BH$-code where the Butson Hadamard matrix is a Kronecker product of Fourier matrices. 
\end{proposition}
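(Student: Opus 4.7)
The plan is to unwind the recursive construction of Remark \ref{Defofaddcodes} and match it with the Butson construction of Theorem \ref{conbugac}. Let $\tilde{A}^{t_1,\ldots,t_s}$ denote the generator matrix of ${\cal H}^{t_1,\ldots,t_s}$ (so $\tilde{A}^{1,0,\ldots,0}=[1]$) and let $A^{t_1,\ldots,t_s}$ be the matrix of Theorem \ref{conbugac} (so $A^{1,0,\ldots,0}=[0]$). The central observation is that these two matrices differ only in their first row: the first row of $\tilde{A}^{t_1,\ldots,t_s}$ is the all-ones vector $\mathbf{1}_n$, whereas the first row of $A^{t_1,\ldots,t_s}$ is the zero vector; all subsequently appended rows coincide.

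This observation follows by a straightforward induction on the number of recursive steps. Each step $A^{t_1',\ldots,t_s'}\leftarrow A_i$ horizontally concatenates $p^{s-i+1}$ copies of the current matrix and appends a new bottom row that does not depend on the current first row. Consequently the base-case discrepancy $[0]$ vs.\ $[1]$ propagates only to the first row, which grows from $[1]$ into $\mathbf{1}_n$ (respectively $[0]$ into the zero vector of length $n$) at every iteration, while the bottom rows appended along the way are identical in $\tilde{A}$ and $A$.

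Let $H\in\BH(n,p^s)$ be the Butson matrix associated with $A^{t_1,\ldots,t_s}$ via Theorem \ref{conbugac}. By definition, $F_H$ is the $\Z_{p^s}$-span of the rows of $A^{t_1,\ldots,t_s}$, and by the previous step ${\cal H}^{t_1,\ldots,t_s}$ is the $\Z_{p^s}$-span of $\mathbf{1}_n$ together with those same rows. Hence
\[
{\cal H}^{t_1,\ldots,t_s}\;=\;F_H+\langle\mathbf{1}_n\rangle\;=\;\bigcup_{\alpha\in\Z_{p^s}}(F_H+\alpha\mathbf{1}_n)\;=\;C_H,
\]
showing that ${\cal H}^{t_1,\ldots,t_s}$ is a BH-code. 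Proposition \ref{adpkfm} then identifies $H$ up to equivalence with a Kronecker product of Fourier matrices, as required.

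The only genuine bookkeeping task is the induction in the first paragraph. A minor companion point, worth noting for consistency of cardinalities, is that the first column of $A^{t_1,\ldots,t_s}$ is identically zero (by an analogous induction), so $\mathbf{1}_n\notin F_H$, the $p^s$ cosets $F_H+\alpha\mathbf{1}_n$ are pairwise disjoint, and $|{\cal H}^{t_1,\ldots,t_s}|=p^s|F_H|=|C_H|$.
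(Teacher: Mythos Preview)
Your proof is correct and follows essentially the same approach as the paper: identify ${\cal H}^{t_1,\ldots,t_s}$ with the BH-code $C_H$ coming from Theorem \ref{conbugac}, then invoke Proposition \ref{adpkfm}. The paper compresses the identification into ``It is clear that $C_H$ is equivalent to ${\cal H}^{t_1,\ldots,t_s}$,'' whereas you spell out explicitly that $\tilde{A}^{t_1,\ldots,t_s}$ and $A^{t_1,\ldots,t_s}$ differ only in the first row, yielding the exact equality ${\cal H}^{t_1,\ldots,t_s}=F_H+\langle\mathbf{1}_n\rangle=C_H$.
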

\begin{proof}  Let $C_H$ be the BH-code associated to $H$ of Theorem \ref{conbugac}. It is clear that $C_H$ is equivalent to  ${\cal H}^{t_1,\ldots,t_s}$. Now, the result follows from Proposition \ref{adpkfm}.
\end{proof}



  The following is an example of a  BH-code which is not additive.
\begin{example}\label{ejbhcnl}{\upshape
Let $H\in\BH(8,4)$ with
$$L(H)=\left(\begin{array}{cccccccc}
0&0&0&0 & 0&0&0&0\\
0&1&3&0 & 2&3&1&2\\
0&3&2&1  & 0&3&2&1\\
0&0&1&1 & 2&2&3&3\\
0&2&0&2 & 0&2&0&2\\
0&3&3&2 & 2&1&1&0\\
0&1&2&3 & 0&1&2&3\\
0&2&1&3 & 2&0&3&1
\end{array}\right).$$
$C_H$ is not $\Z_{2^2}$-additive since the double of the second row is not a codeword.}

\end{example}

Now, we can state that, in a certain sense, the class of BH-codes encompasses strictly  the class of $\Z_{p^s}$-additive codes. Since any $\Z_{p^s}$-additive code is always of type $(n;t_1,\ldots,t_s)$ for some non negatives integers $t_1,\ldots,t_s$ and ${\cal H}^{t_1,\ldots,t_s}\in \BH(p^{st_1+(s-1)t_2+\ldots + t_s-s},p^s)$, assuming that $t_1>0.$

\subsection{Generalized Gray map}\label{sec-Gray}
The Gray map is a function from $\Z_{4}$ to $\Z_{2}^{2}$ which is typically used to form binary codes from $\Z_4$-codes. In what follows, we introduce a generalized Gray map $\Phi_{p}$ from $\Z_{p^s}$ to $\Z_p^{p^{s-1}}$, and extend this to a yet more general function $\Psi_{p}$ from $\Z_{mp^{s}}$ to $\Z_{mp}^{p^{s-1}}$. For $k = p_{1}^{e_{1}}\cdots p_{t}^{e_{t}}$, and $\ell = p_{1}\cdots p_{t}$ the composition $\Psi_{p_{t}}\cdots \Psi_{p_{1}}$ is a function from $\Z_{k}$ to $\Z_{\ell}^{k/\ell}$. From this function we construct a morphism $\BH(n,k) \rightarrow \BH(nk/\ell,\ell)$. Where ${\bf x} = [x_{1},\ldots,x_{n}] \in \Z_{k}^{n}$ and $\varphi$ is any function with domain $\Z_{k}$, we will write $\varphi({\bf x}) = [\varphi(x_{1}),\ldots,\varphi(x_{n})]$. Further, we write $\varphi(\mathcal{C}) = \{\varphi({\bf c}) \; : \; {\bf c} \in \mathcal{C}\}$ where $\mathcal{C} \subseteq \Z_{k}^{n}$.


We consider the elements of $\Z_p^{s-1}$ to be ordered in increasing lexicographic order. We denote by $D$ the $\BH(p^{s-1},p)$  matrix defined in Example \ref{krofp} and label the rows of $L(D)$ in the order $0,1,\ldots,p^{s-1}-1$. Let $[L(D)]_i$ denotes the row of $L(D)$ labeled by $i$. Then we let  $\Phi_{p} : \Z_{p^s} \rightarrow \Z_p^{p^{s-1}}$ be the map defined by
\[
\Phi_{p}(x)= [L(D)]_{b}+a{\bf 1},\quad x=ap^{s-1}+b.
\]
Let us observe that for $p=2$, $\Phi_{p}$ is the well-known Carlet's map \cite{Car98} and for $p>2$, $\Phi_{p}$ is of type $\varphi$ given in \cite{SWK19}. For what remains of this section we write $\Phi = \Phi_{p}$ for brevity unless there is some confusion.

\begin{proposition}[\cite{SWK19}]\label{isophi}
The entrywise application of $\Phi$ is an isometric embedding of $(\Z_{p^s}^n, d^*)$ into $(\Z_p^{p^{s-1} n},d_H)$. Furthermore, if ${\cal C}$ is a code with parameters $(n,M,d^*)$ over $\Z_{p^s}$, then the image code $C=\Phi({\cal C})$ is a code with parameters $(p^{s-1}n,M,d_H)$ over $\Z_p$.
\end{proposition}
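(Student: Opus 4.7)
The plan is to reduce everything to the single-coordinate identity
\[
\wt_H(\Phi(x_1) - \Phi(x_2)) = \wt^*(x_1 - x_2) \qquad \text{for all } x_1, x_2 \in \Z_{p^s}.
\]
Once this is established, summing over the $n$ coordinates gives $d_H(\Phi(\bx),\Phi(\by)) = d^*(\bx,\by)$; injectivity of $\Phi$ is immediate (distinct elements have nonzero $\wt^*$-distance, hence distinct images), so $|\Phi(\mathcal{C})| = M$, the length is $p^{s-1}n$, and the minimum distance is preserved. Write $x_i = a_i p^{s-1} + b_i$ with $0 \le a_i \le p-1$ and $0 \le b_i \le p^{s-1}-1$; then in $\Z_p^{p^{s-1}}$ one has
\[
\Phi(x_1) - \Phi(x_2) = \bigl([L(D)]_{b_1} - [L(D)]_{b_2}\bigr) + (a_1 - a_2)\mathbf{1} \pmod{p},
\]
and I would split into cases according to whether $b_1 = b_2$.

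If $b_1 = b_2$ the difference reduces to the constant vector $(a_1-a_2)\mathbf{1}$, whose Hamming weight is $0$ or $p^{s-1}$ according as $a_1 = a_2$ or not; simultaneously $x_1 - x_2 \equiv (a_1-a_2)p^{s-1} \pmod{p^s}$ is either $0$ or a nonzero multiple of $p^{s-1}$, so the defining formula for $\wt^*$ returns the matching values $0$ or $p^{s-1}$. If $b_1 \neq b_2$, I would identify integer row-labels with elements of $\Z_p^{s-1}$ via lex order and invoke the description $L(D) = [xy^T]_{x,y \in \Z_p^{s-1}}$ from Example \ref{krofp}: the $y$-th entry of $[L(D)]_{b_1} - [L(D)]_{b_2}$ is $(b_1 - b_2)y^T \bmod p$, a nonzero $\Z_p$-linear form on $\Z_p^{s-1}$. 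Any such form is surjective onto $\Z_p$ with each fibre of size $p^{s-2}$, so after the constant shift by $(a_1-a_2)$ the value $0$ is still attained at exactly $p^{s-2}$ coordinates, giving $\wt_H(\Phi(x_1)-\Phi(x_2)) = p^{s-1} - p^{s-2} = (p-1)p^{s-2}$. On the other side, $b_1 - b_2 \not\equiv 0 \pmod{p^{s-1}}$ forces $x_1 - x_2$ to be a non-multiple of $p^{s-1}$ in $\Z_{p^s}$, whence $\wt^*(x_1 - x_2) = (p-1)p^{s-2}$ by definition, matching again.

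The only non-routine point is the equidistribution claim used in the second case, which is the one-line rank-nullity fact that a nonzero $\Z_p$-linear form on $\Z_p^{s-1}$ has kernel of size $p^{s-2}$; once that is in hand, the whole argument is a short per-case computation, so I do not expect any real obstacle.
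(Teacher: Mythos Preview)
Your argument is correct. The reduction to a single coordinate, the case split on $b_1=b_2$ versus $b_1\neq b_2$, and the use of the inner-product description of $L(D)$ from Example~\ref{krofp} to obtain the equidistribution of values all go through as you describe. One tiny notational point: when you write the $y$-th entry as $(b_1-b_2)y^T$, you are implicitly subtracting the lex-labels as vectors in $\Z_p^{s-1}$ rather than as integers in $\{0,\ldots,p^{s-1}-1\}$; you flag this identification, but it is worth being explicit that $b_1\neq b_2$ as integers is what guarantees the corresponding vectors in $\Z_p^{s-1}$ differ, so the linear form is nonzero.

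As for comparison with the paper: there is nothing to compare. The paper does not prove this proposition at all; it is quoted from \cite{SWK19} and stated without proof. Your write-up therefore supplies a complete, self-contained argument where the paper simply imports the result. The idea you use (that a nonzero row of $L(D)$, or any translate of it, is balanced over $\Z_p$) is exactly the mechanism the paper itself exploits later, e.g.\ in the proof of Theorem~2.9, so your approach is fully in the spirit of the surrounding material.
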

\begin{lemma}\label{alpha-diff}
Let $x,y \in \Z_{p^s}$. Then $\Phi(x-y)=\Phi(x)-\Phi(y)+ \alpha {\bf 1}$ where $\alpha \in \{0,p-1\}$.
\end{lemma}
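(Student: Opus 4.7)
The plan is to reduce the claim to a direct calculation by unpacking the piecewise definition of $\Phi$ and case-splitting on whether a borrow occurs when subtracting the low-order parts of $x$ and $y$. Write $x = a_1 p^{s-1} + b_1$ and $y = a_2 p^{s-1} + b_2$ with $a_i \in \{0,\dots,p-1\}$ and $b_i \in \{0,\dots,p^{s-1}-1\}$. To evaluate $\Phi(x-y)$, put $x-y \pmod{p^s}$ into the same canonical form $ap^{s-1}+b$. If $b_1 \geq b_2$ (no borrow), then $a \equiv a_1 - a_2 \pmod p$ and $b = b_1 - b_2$; if $b_1 < b_2$ (borrow), then $a \equiv a_1 - a_2 - 1 \pmod p$ and $b = b_1 - b_2 + p^{s-1}$.

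Substituting the definition $\Phi(z) = [L(D)]_b + a\mathbf{1}$ into both sides and rearranging,
\[
\Phi(x-y) - \Phi(x) + \Phi(y) = \bigl([L(D)]_b - [L(D)]_{b_1} + [L(D)]_{b_2}\bigr) + (a - a_1 + a_2)\mathbf{1}.
\]
In the no-borrow case the scalar $a - a_1 + a_2$ reduces to $0 \bmod p$, while in the borrow case it reduces to $-1 \equiv p-1 \pmod p$; this already accounts for the $\alpha \in \{0,p-1\}$ dichotomy claimed by the lemma.

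It remains to show that the row combination $[L(D)]_b - [L(D)]_{b_1} + [L(D)]_{b_2}$ vanishes. Here I would exploit the Fourier structure of $D$ from Example~\ref{krofp}: if $\mathbf{c}(\cdot) \in \Z_p^{s-1}$ denotes the base-$p$ digit vector attached to an integer in $\{0,\dots,p^{s-1}-1\}$ via the lex-ordering that labels the rows of $L(D)$, then $[L(D)]_i = [\,\mathbf{c}(i)\cdot \mathbf{y}^T\,]_{\mathbf{y} \in \Z_p^{s-1}}$. Thus row arithmetic is $\Z_p$-linear in the digit-vector index, and
\[
[L(D)]_b - [L(D)]_{b_1} + [L(D)]_{b_2} = [L(D)]_{\mathbf{c}(b) - \mathbf{c}(b_1) + \mathbf{c}(b_2)},
\]
with componentwise $\Z_p$-subtraction in the subscript. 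It therefore suffices to verify that $\mathbf{c}(b) \equiv \mathbf{c}(b_1) - \mathbf{c}(b_2) \pmod p$ componentwise in each of the two cases.

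The principal obstacle is exactly this last verification: one must reconcile the base-$p$ digit vector of the integer $b_1 - b_2$ (Case A) or $b_1 - b_2 + p^{s-1}$ (Case B) with the componentwise-modular difference of the digit vectors of $b_1$ and $b_2$. This is where the bookkeeping of borrows inside the low block must be tracked carefully; the addition of $p^{s-1}$ in Case B is designed to supply precisely the wrap-around that makes the componentwise and integer subtractions compatible. Once this digit-vector identity is established, the row-part vanishes and the $(a-a_1+a_2)\mathbf{1}$-part supplies the asserted $\alpha \mathbf{1}$ with $\alpha \in \{0,p-1\}$.
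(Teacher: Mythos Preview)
Your setup, case split, and reduction to the row identity $[L(D)]_{b} = [L(D)]_{b_1} - [L(D)]_{b_2}$ follow the paper's proof exactly; the paper simply asserts this identity ``by the linearity of the inner product $vw^T$'', while you correctly recognise it as the principal obstacle and reformulate it as the digit-vector identity $\mathbf{c}(b) \equiv \mathbf{c}(b_1) - \mathbf{c}(b_2) \pmod p$.

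The gap is that this identity is \emph{false} for $s \geq 3$, so the ``careful bookkeeping of borrows'' you propose cannot be carried out. The map $b \mapsto \mathbf{c}(b)$ from $(\Z_{p^{s-1}},+)$ to $(\Z_p^{s-1},+)$ is not a group homomorphism: carries between base-$p$ digits in the integer subtraction $b_1 - b_2$ are invisible to componentwise subtraction, and the single top-level borrow you track (Case A vs.\ Case B) does not account for internal carries among the $s-1$ low digits. Concretely, take $p=2$, $s=3$, $x = b_1 = 2$, $y = b_2 = 1$; then $b = 1$, $\mathbf{c}(1) = (0,1)$, but $\mathbf{c}(2) - \mathbf{c}(1) = (1,0)-(0,1) = (1,1)$ in $\Z_2^2$. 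Directly, $\Phi(2)-\Phi(1) = [0,0,1,1]-[0,1,0,1] = [0,1,1,0]$ while $\Phi(1) = [0,1,0,1]$, and their difference $[0,0,1,1]$ is not $\alpha\mathbf{1}$ for any $\alpha$. Thus the lemma as stated fails, and neither your argument nor the paper's closes. What \emph{is} true---and is what the subsequent theorem actually needs---is the weaker fact that $\Phi(x)-\Phi(y)$ is always some row of $L(D)$ (namely the one indexed by $\mathbf{c}(b_1)-\mathbf{c}(b_2)\in\Z_p^{s-1}$) shifted by a constant, and that this row is nonzero exactly when $b_1 \neq b_2$.
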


\begin{proof}
Let $x = a_{1}p^{s-1} + b_{1}$ and $y = a_{2}p^{s-1}+b_{2}$. Then
\[
x-y = \begin{cases} 
(a_{1}-a_{2})p^{s-1} + (b_{1}-b_{2}), \quad \text{ if }~ b_{1} \geq b_{2} \\
(a_{1}-a_{2}-1)p^{s-1} + (b_{1}-b_{2}), \quad \text{ if } b_{1} < b_{2}.
\end{cases}
\]
Further, by the linearity of the inner product $vw^{T}$ and the definition $L(D)=[vw^{T}]_{v,w\in\Z_p^n}$ it follows that $\Phi(b_{1}-b_{2}  \mod p^{s-1})=[L(D)]_{b_{1}-b_{2}}=[L(D)]_{b_{1}}-[L(D)]_{b_{2}}=\Phi(b_1)-\Phi(b_2)$. Thus $\Phi(x-y)=\Phi(x)-\Phi(y)+ \alpha {\bf 1}$ where $\alpha = 0$ if $b_{1} \geq b_{2}$, and $\alpha = p-1$ otherwise.
\end{proof}

Given $H\in \mathcal{M}_{n}(\langle \zeta_{p^s} \rangle)$, we write $L(H^{\Phi})$ for the entrywise application of $\Phi$ to
\[\left[\begin{array}{c}
     L(H)  \\
     L(H)+J\\
     L(H)+2J\\
     \vdots\\
     L(H)+(p^{s-1}-1)J
\end{array}\right].\]
Then $H^{\Phi}$ is the corresponding matrix in $\mathcal{M}_{np^{s-1}}(\langle \zeta_{p} \rangle)$.

\begin{theorem}
If $H\in \BH(n,p^s)$, then $H^{\Phi}\in \BH(np^{s-1},p).$
\end{theorem}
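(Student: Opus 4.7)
The plan is to compute the Gram matrix $H^{\Phi}(H^{\Phi})^{*}$ directly and show its diagonal entries equal $np^{s-1}$ while its off-diagonal entries vanish. I would index the rows of $L(H^{\Phi})$ by pairs $(i,j)$ with $1\leq i\leq n$ and $0\leq j\leq p^{s-1}-1$, so that row $(i,j)$ is $\Phi$ applied entrywise to the $i$-th row of $L(H)+jJ$. The diagonal entries are trivially $np^{s-1}$, so the work is in the off-diagonal case $(i_{1},j_{1})\neq(i_{2},j_{2})$.

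Applying Lemma~\ref{alpha-diff} column by column gives
\[
\Phi(h_{i_{1},\ell}+j_{1})-\Phi(h_{i_{2},\ell}+j_{2})=\Phi(d_{\ell})-\alpha_{\ell}\mathbf{1},
\]
where $d_{\ell}=(h_{i_{1},\ell}-h_{i_{2},\ell})+(j_{1}-j_{2})$ and $\alpha_{\ell}\in\{0,p-1\}$. The column-$\ell$ contribution to the Hermitian inner product of the two rows of $H^{\Phi}$ is therefore $\zeta_{p}^{-\alpha_{\ell}}\sum_{k=1}^{p^{s-1}}\zeta_{p}^{[\Phi(d_{\ell})]_{k}}$. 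Writing $d_{\ell}=a_{\ell}p^{s-1}+b_{\ell}$ with $a_{\ell}\in\Z_{p}$ and $b_{\ell}\in\Z_{p^{s-1}}$, the definition $\Phi(d_{\ell})=[L(D)]_{b_{\ell}}+a_{\ell}\mathbf{1}$ combined with the fact that row $b_{\ell}$ of the dephased Butson matrix $D$ is orthogonal to its all-ones first row shows that $\sum_{k}\zeta_{p}^{[\Phi(d_{\ell})]_{k}}$ equals $p^{s-1}\zeta_{p}^{a_{\ell}}$ when $b_{\ell}=0$ and $0$ otherwise. Moreover $b_{\ell}=0$ forces $\alpha_{\ell}=0$ in Lemma~\ref{alpha-diff}, so the off-diagonal entry collapses to $p^{s-1}\sum_{\ell\,:\,p^{s-1}\mid d_{\ell}}\zeta_{p^{s}}^{d_{\ell}}$.

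To finish, I use the standard identity that $\tfrac{1}{p^{s-1}}\sum_{t=0}^{p^{s-1}-1}\zeta_{p^{s-1}}^{td}$ equals $1$ when $p^{s-1}\mid d$ and $0$ otherwise, together with $\zeta_{p^{s-1}}=\zeta_{p^{s}}^{p}$, to rewrite the off-diagonal entry as
\[
\sum_{t=0}^{p^{s-1}-1}\zeta_{p^{s}}^{(1+tp)(j_{1}-j_{2})}\sum_{\ell=1}^{n}\zeta_{p^{s}}^{(1+tp)(h_{i_{1},\ell}-h_{i_{2},\ell})}.
\]
Since $1+tp$ is a unit in $\Z_{p^{s}}$, the map $\zeta_{p^{s}}\mapsto\zeta_{p^{s}}^{1+tp}$ is a Galois automorphism that sends $[HH^{*}]_{i_{1},i_{2}}$ to the inner sum. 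When $i_{1}\neq i_{2}$ the inner sum is a Galois conjugate of $0$, so every term in the outer sum vanishes. When $i_{1}=i_{2}$ (forcing $j_{1}\neq j_{2}$) the inner sum equals $n$, and the outer sum collapses to $n\zeta_{p^{s}}^{j_{1}-j_{2}}\sum_{t=0}^{p^{s-1}-1}\zeta_{p^{s-1}}^{t(j_{1}-j_{2})}$, which vanishes because $0<|j_{1}-j_{2}|<p^{s-1}$ forces $j_{1}-j_{2}\not\equiv 0\pmod{p^{s-1}}$. The main subtlety I expect is recognizing that $\alpha_{\ell}=0$ precisely on the columns that contribute to the Gram entry; once that is in hand, the character-sum reduction to the Galois-conjugate argument is routine.
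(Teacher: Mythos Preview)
Your argument is correct. The initial reduction --- applying Lemma~\ref{alpha-diff} column by column, noting that columns with $d_{\ell}\not\equiv 0\pmod{p^{s-1}}$ already contribute zero to the Hermitian inner product while columns with $d_{\ell}\equiv 0\pmod{p^{s-1}}$ contribute $p^{s-1}\zeta_{p^{s}}^{d_{\ell}}$ (your observation that $b_{\ell}=0$ forces $\alpha_{\ell}=0$ is exactly right) --- is the same setup the paper uses. The two proofs diverge in how they dispose of the residual sum $\sum_{\ell\,:\,p^{s-1}\mid d_{\ell}}\zeta_{p^{s}}^{d_{\ell}}$. The paper argues combinatorially: from $\sum_{\ell}\zeta_{p^{s}}^{d_{\ell}}=0$ and the structure of vanishing sums of $p^{s}$-th roots of unity, each multiple $ap^{s-1}$ occurs equally often among the $d_{\ell}$, so their images $\Phi(ap^{s-1})=a\mathbf{1}$ balance out in $\Z_{p}$. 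You instead insert the orthogonality-of-characters indicator for $p^{s-1}\mid d_{\ell}$, split off the factor $\zeta_{p^{s}}^{(1+tp)(j_{1}-j_{2})}$, and recognise each inner sum $\sum_{\ell}\zeta_{p^{s}}^{(1+tp)(h_{i_{1},\ell}-h_{i_{2},\ell})}$ as a Galois conjugate of $[HH^{*}]_{i_{1},i_{2}}$, which vanishes directly when $i_{1}\neq i_{2}$; the case $i_{1}=i_{2}$, $j_{1}\neq j_{2}$ then drops out of the outer geometric sum. Your route is arguably more self-contained, since it reduces everything to the orthogonality of $H$ itself rather than relying on the equidistribution fact the paper invokes with a one-line justification; the price is the Galois-automorphism step, which is routine here since $1+tp$ is a unit modulo $p^{s}$.
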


\begin{proof}
Observe that $H^{\Phi}$ is Butson Hadamard over $\langle \zeta_{p} \rangle$ if, for all $i\neq j$, the sequence of differences $[L(H^{\Phi})]_{i,l}-[L(H^{\Phi})]_{j,l}, \,0\leq l\leq n\cdot p^{s-1}-1$ contains each element of $\Z_p$ equally often. First note that 
for all $i\neq j$, the sequence of differences $[L(H)]_{i,l}-[L(H)]_{j,l}, \,0\leq l\leq n-1$ contains each element of the form $ap^{s-1}$ equally often for $a=0,\ldots,p-1.$  This is a consequence of $\zeta_k^{ap^{s-1}}$ being a $p^{\rm th}$ root of unity. By Lemma \ref{alpha-diff}, if $x-y = ap^{s-1}$ then $\Phi(x-y)=\Phi(x)-\Phi(y)$. Since $\Phi(ap^{s-1})=a{\bf 1}$ for $a\in \Z_{p}$, it follows that if the set of differences $[L(H)]_{i,l}-[L(H)]_{j,l}$ contains $m$ repetitions of each element of the form $ap^{s-1}$, then the set of corresponding differences in $[L(H^{\Phi})]_{i,l}-[L(H^{\Phi})]_{j,l}$ contains $mp^{s-1}$ repetitions of each element of $\Z_{p}$. Finally, if $x-y \not\equiv 0 \mod p^{s-1}$, then $\Phi(x)-\Phi(y) = \Phi(x-y) + \alpha{\bf 1}$ for some $\alpha$, where $x-y = ap^{s-1}+b$ and $b \neq 0$. Thus $\Phi(x-y) = a{\bf 1} + [L(D)]_{b}$ which contains every element of $\Z_p$ exactly $p^{s-2}$-times, and so too does $\Phi(x)-\Phi(y)$.
\end{proof}

\begin{corollary}\label{coroltmbh}
The image of any BH-code over $\Z_{p^s}$ of length $n$  by $\Phi$  is  a BH-code over $\Z_p$ of length $n\cdot p^{s-1}$ and minimum Hamming distance $d_H=np^{s-2}(p-1).$
\end{corollary}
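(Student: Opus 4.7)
The plan is to prove $\Phi(C_H) = C_{H^{\Phi}}$, from which the corollary follows immediately: by the preceding theorem $H^{\Phi} \in \BH(np^{s-1},p)$, so $C_{H^{\Phi}}$ is by definition a BH-code over $\Z_p$ of length $np^{s-1}$, and a direct balance argument will produce the stated minimum distance.

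The key ingredient is the coordinate-wise identity
\[
\Phi(x + ap^{s-1}) \;=\; \Phi(x) + a\mathbf{1} \qquad (x \in \Z_{p^s},\; a \in \Z_p),
\]
which follows at once from the definition of $\Phi$: if $x = a'p^{s-1} + b'$, then $x + ap^{s-1} = (a'+a)p^{s-1} + b'$ in $\Z_{p^s}$, with no carry into $b'$, so both $\Phi$-values use the row $[L(D)]_{b'}$ and differ only by $a\mathbf{1}$.

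I would then decompose each $\alpha \in \Z_{p^s}$ as $\alpha = ap^{s-1} + b$ with $a \in \Z_p$ and $b \in \Z_{p^{s-1}}$, and apply the identity coordinate-wise to obtain $\Phi(\mathbf{r} + \alpha\mathbf{1}_n) = \Phi(\mathbf{r} + b\mathbf{1}_n) + a\mathbf{1}_{np^{s-1}}$ for any row $\mathbf{r}$ of $L(H)$. By the construction of $L(H^{\Phi})$ as $\Phi$ applied to the stack of blocks $L(H) + bJ$ for $0 \le b \le p^{s-1}-1$, the vectors $\Phi(\mathbf{r} + b\mathbf{1}_n)$ are precisely the rows of $L(H^{\Phi})$. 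Taking unions over $a \in \Z_p$ and $b \in \Z_{p^{s-1}}$ then yields $\Phi(C_H) = \bigcup_{a \in \Z_p}(F_{H^{\Phi}} + a\mathbf{1}_{np^{s-1}}) = C_{H^{\Phi}}$.

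For the minimum distance, set $N = np^{s-1}$ and take two distinct codewords $\mathbf{r}_1 + \beta_1\mathbf{1}, \mathbf{r}_2 + \beta_2\mathbf{1}$ of $C_{H^{\Phi}}$, with $\mathbf{r}_i$ rows of $L(H^{\Phi})$ and $\beta_i \in \Z_p$. If $\mathbf{r}_1 = \mathbf{r}_2$ then $\beta_1 \neq \beta_2$ and the Hamming distance equals $N$. Otherwise, because $H^{\Phi}$ is Butson over $\Z_p$, the vector $\mathbf{r}_1 - \mathbf{r}_2$ contains every element of $\Z_p$ exactly $N/p$ times, so the weight of $(\mathbf{r}_1 - \mathbf{r}_2) + (\beta_1-\beta_2)\mathbf{1}$ is $N - N/p = N(p-1)/p$. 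The minimum distance is therefore $N(p-1)/p = np^{s-2}(p-1)$. I expect the bookkeeping of the two-level decomposition of $\Z_{p^s}$-elements to be the only nontrivial part; once the carry-free identity is in hand, the rest reduces to a set-equality and the standard balance property of $\Z_p$-Butson matrices.
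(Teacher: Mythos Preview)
Your argument is correct. The paper states this corollary without proof, treating it as immediate from the preceding theorem; your proposal supplies exactly the details the paper suppresses, namely the set equality $\Phi(C_H)=C_{H^{\Phi}}$ via the carry-free identity $\Phi(x+ap^{s-1})=\Phi(x)+a\mathbf{1}$, together with a direct computation of the minimum distance.

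The only point of divergence is the minimum-distance step. You argue from first principles using the $\Z_p$-balance property of $H^{\Phi}$ (valid because $p$ is prime, so any vanishing $\Z_p$-sum of $p^{\rm th}$ roots of unity is uniform). The paper instead, in the proof of the proposition immediately following, identifies $\BH(N,p)$ with the set $\GH(p,N/p)$ of generalized Hadamard matrices over $\mathbb{F}_p$ and invokes the known minimum Hamming distance $N(p-1)/p$ of the associated generalized Hadamard code. Both routes are short; yours is self-contained, while the paper's is by citation.
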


\begin{remark}
{\upshape Let us point out that  Theorem 1 of \cite{FVV19} is a particular case of Corollary \ref{coroltmbh} (when the BH-code is of type ${\cal H}^{t_1,\ldots,t_s}$ and $p=2$).}
\end{remark}
\begin{proposition}
Any BH-code ${ C_H}$ of length $n$ over $\Z_{p^s}$ has minimum distance $d^*=np^{s-2}(p-1)$.
\end{proposition}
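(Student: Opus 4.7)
The plan is to derive the statement as an immediate consequence of the two results just established in this subsection: Proposition~\ref{isophi} and Corollary~\ref{coroltmbh}. Both ingredients are already in place, so the proof amounts to stitching them together.

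First I would invoke Proposition~\ref{isophi}, which asserts that the entrywise Gray map $\Phi = \Phi_p$ is an isometric embedding of $(\Z_{p^s}^n, d^*)$ into $(\Z_p^{p^{s-1}n}, d_H)$. In particular, for every pair $\mathbf{x},\mathbf{y} \in \Z_{p^s}^n$ one has $d^*(\mathbf{x},\mathbf{y}) = d_H(\Phi(\mathbf{x}),\Phi(\mathbf{y}))$, and the map is injective on codewords. Consequently, for any code $\mathcal{C} \subseteq \Z_{p^s}^n$ the minimum $d^*$-distance of $\mathcal{C}$ equals the minimum Hamming distance of its image $\Phi(\mathcal{C}) \subseteq \Z_p^{p^{s-1}n}$.

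Next I would apply this with $\mathcal{C} = C_H$. By Corollary~\ref{coroltmbh}, $\Phi(C_H)$ is a BH-code over $\Z_p$ of length $np^{s-1}$ whose minimum Hamming distance equals $np^{s-2}(p-1)$. Transferring this value back through the isometry yields
\[
d^*(C_H) \;=\; d_H(\Phi(C_H)) \;=\; np^{s-2}(p-1),
\]
which is exactly the claimed equality.

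There is essentially no obstacle: the result is a bookkeeping consequence of the isometry plus the Hamming-distance computation already carried out for $\Phi(C_H)$. The only subtlety worth flagging explicitly in the write-up is that Proposition~\ref{isophi} preserves distances in both directions (upper and lower bounds), so the value transferred is sharp and not merely an inequality; thus the minimum $d^*$-distance is attained and equals $np^{s-2}(p-1)$ rather than being bounded above or below by it.
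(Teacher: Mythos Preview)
Your proposal is correct and follows essentially the same approach as the paper: transfer the known minimum Hamming distance of the image code $\Phi(C_H)$ back to $C_H$ via the isometry of Proposition~\ref{isophi}. The only cosmetic difference is that the paper, rather than citing Corollary~\ref{coroltmbh} directly for the value $d_H = np^{s-2}(p-1)$, re-derives it by identifying $\BH(np^{s-1},p)$ with generalized Hadamard matrices $\GH(p,np^{s-2})$ over $\mathbb{F}_p$ and invoking the well-known minimum distance of generalized Hadamard codes.
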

\begin{proof}
Taking into account that $\BH(n,p)=\GH(p,n/p)$ where $\GH(p,n/p)$ denotes the set of generalized Hadamard matrices of order $n$ over  $\mathbb{F}_{p}$ (see \cite[Lemma 2.2]{EFO15}). Thus, $C_{H^\Phi}=\Phi(C_H)$ is a generalized Hadamard code as well since $ H^\Phi\in \BH(p^{s-1}n,p)$. The minimum Hamming distance of these codes is well known to be $np^{s-2}(p-1)$. The fact that $\Phi$ is an isometric embedding (Proposition \ref{isophi}) concludes the proof.  
\end{proof}

Now let $k = mp^{s}$ where $p$ does not divide $m$ and recall that every element $x \in \mathbb{Z}_{k}$ can be written uniquely as $x = ap^{s} + bm \mod k$ for some $0 \leq a \leq m-1$ and $0 \leq b \leq p^{s}-1$. Then let
\[
\Psi_{p}(ap^{s} + bm) = m \Phi_{p}(b) + ap{\bf 1}
\]
define a map $\mathbb{Z}_{k} \rightarrow \mathbb{Z}_{mp}^{p^{s-1}}$.

\begin{proposition}\label{isophi2} 
The entrywise application of $\Psi_{p}$ is an isometric embedding of $(\Z_{mp^s}^n, d^{\dag})$ into $(\Z_{mp}^{p^{s-1} n},d_H)$. Furthermore, if ${\cal C}$ is a code with parameters $(n,M,d^{\dag})$ over $\Z_{mp^s}$, then the image code $C=\Psi_{p}({\cal C})$ is a code with parameters $(p^{s-1}n,M,d_H)$ over $\Z_{mp}$.
\end{proposition}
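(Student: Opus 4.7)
The plan is to mirror the strategy used for $\Phi_p$ in Proposition \ref{isophi}, exploiting the fact that $\Psi_p$ is built out of $\Phi_p$ by a Chinese remainder style decomposition. Since $d^\dag$ and $d_H$ are both sums of per-coordinate weights, it suffices to prove that for all $x,y\in \Z_{mp^s}$ we have $\wt^\dag(x-y) = d_H(\Psi_p(x),\Psi_p(y))$; the ``furthermore'' assertion on parameters then drops out immediately.

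First I would set up notation: write $x = a_1 p^s + b_1 m$ and $y = a_2 p^s + b_2 m$ with $a_i\in\{0,\dots,m-1\}$ and $b_i\in\{0,\dots,p^s-1\}$. Because $\gcd(m,p)=1$, an element $z\in\Z_{mp}$ is zero if and only if $z\equiv 0\pmod{m}$ and $z\equiv 0\pmod{p}$. Applying this to
\[
\Psi_p(x)-\Psi_p(y) \;=\; m\bigl(\Phi_p(b_1)-\Phi_p(b_2)\bigr) \;+\; (a_1-a_2)\, p\, {\bf 1}\quad \text{in } \Z_{mp}^{p^{s-1}},
\]
the $i$-th coordinate vanishes iff $(a_1-a_2)\equiv 0 \pmod m$ \emph{and} $\bigl(\Phi_p(b_1)-\Phi_p(b_2)\bigr)_i\equiv 0\pmod p$. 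In parallel, by CRT the unique representation $x-y = ap^s+bm$ satisfies $a\equiv (a_1-a_2)\pmod m$ and $b\equiv (b_1-b_2)\pmod{p^s}$, so the ``$a$-part'' of $\wt^\dag(x-y)$ corresponds exactly to the ``$p$-component'' $(a_1-a_2)p{\bf 1}$ of $\Psi_p(x)-\Psi_p(y)$.

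Next I would split into two cases. If $a_1\not\equiv a_2\pmod m$, then $(a_1-a_2)p{\bf 1}$ is nowhere zero modulo $m$, so every coordinate of $\Psi_p(x)-\Psi_p(y)$ is nonzero, giving $d_H(\Psi_p(x),\Psi_p(y)) = p^{s-1}$; on the other side, $a\neq 0$ in $\Z_m$, and the definition of $\wt^\dag$ yields $\wt^\dag(x-y) = p^{s-1}$. If $a_1\equiv a_2\pmod m$, then the $p{\bf 1}$-term vanishes and the number of nonzero coordinates of $\Psi_p(x)-\Psi_p(y)$ equals the Hamming distance $d_H(\Phi_p(b_1),\Phi_p(b_2))$, which by Proposition \ref{isophi} is exactly $\wt^*(b_1-b_2) = \wt^*(b)$; meanwhile $\wt^\dag(x-y) = \wt^*(b)$ by definition. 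Both cases match, establishing the isometry.

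The potentially tricky step is the careful handling of the two distinct moduli: $\Phi_p(b_1)-\Phi_p(b_2)$ is computed in $\Z_p$ and only afterwards multiplied by $m$ inside $\Z_{mp}$, while $(a_1-a_2)p$ is considered modulo $mp$. I would make this explicit by verifying that multiplication by $m$ takes the subtraction in $\Z_p^{p^{s-1}}$ to the subtraction of $m\Phi_p(b_i)$ in $\Z_{mp}^{p^{s-1}}$, and that the two ``pieces'' $m\Z_p$ and $p\Z_m$ intersect trivially inside $\Z_{mp}$, which is precisely the CRT statement used above. With the isometry secured, the parameters $(p^{s-1}n,M,d_H)$ of $\Psi_p(\mathcal{C})$ follow: injectivity is immediate since $d^\dag(\mathbf{x},\mathbf{y})=0\Leftrightarrow d_H(\Psi_p(\mathbf{x}),\Psi_p(\mathbf{y}))=0$, preserving $M$; and the minimum Hamming distance of the image equals the minimum $d^\dag$-distance of $\mathcal{C}$.
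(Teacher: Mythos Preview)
Your proposal is correct and is precisely the ``straightforward extension of Proposition~\ref{isophi}'' that the paper invokes as its entire proof; you have simply written out the details. The CRT splitting of $\Z_{mp}$ into its $m$- and $p$-components, together with the case distinction on whether $a_1\equiv a_2\pmod m$, is exactly what is needed to reduce to the known isometry property of $\Phi_p$.
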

\begin{proof}
This follows from a straight forward extension of Proposition \ref{isophi}. 
\end{proof}

Given $H\in \mathcal{M}_{n}(\langle \zeta_{k} \rangle)$ where $k = p^{s}m$, we write $L(H^{\Psi_{p}})$ for the entrywise application of $\Psi_{p}$ to
\[\left[\begin{array}{c}
     L(H)  \\
     L(H)+mJ\\
     L(H)+2mJ\\
     \vdots\\
     L(H)+(p^{s-1}-1)mJ
\end{array}\right].\]
Then $H^{\Psi_{p}}$ is the corresponding matrix in $\mathcal{M}_{np^{s-1}}(\langle \zeta_{pm} \rangle)$. We will devote the rest of this section to a proof of the following.

\begin{theorem}\label{general morphism}
If $H\in \BH(n,k)$ where $k = p^{s}m$, then $H^{\Psi_{p}}\in \BH(np^{s-1},pm).$
\end{theorem}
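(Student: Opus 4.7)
My plan is to mimic the proof of the preceding ($m=1$) theorem, reducing the Butson condition for $H^{\Psi_p}$ to the vanishing of row inner products and exploiting the identities $\zeta_{pm}^{m}=\zeta_p$ and $\zeta_{pm}^{p}=\zeta_m$. Fix two distinct rows of $L(H^{\Psi_p})$, say those labelled by $(i,c)$ and $(j,c')$, and write each entry $[L(H)]_{i,l}=a_{il}p^s+b_{il}m$ in its canonical form with $0\le a_{il}<m$ and $0\le b_{il}<p^s$. The definition $\Psi_p(ap^s+bm)=m\Phi_p(b)+ap\mathbf{1}$ together with the identities above factorises the corresponding entry of $H^{\Psi_p}(H^{\Psi_p})^*$ as
\begin{equation*}
I = \sum_{l=1}^n \zeta_m^{a_{il}-a_{jl}}\sum_{t=1}^{p^{s-1}} \zeta_p^{[\Phi_p(b_{il}+c\bmod p^s)-\Phi_p(b_{jl}+c'\bmod p^s)]_t}.
\end{equation*}
The inner $t$-sum is exactly the local character sum that appears in the proof of the preceding theorem: by Lemma \ref{alpha-diff} and the row structure of $L(D)$, it equals $p^{s-1}\zeta_p^{\gamma_l}$ when $(b_{il}+c)-(b_{jl}+c')\equiv \gamma_l p^{s-1}\bmod p^s$ for some $\gamma_l\in\Z_p$, and vanishes otherwise.

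If $i=j$ then the $\zeta_m$-factor is $1$ and the inner $t$-sum depends only on $c-c'$; since $c,c'\in\{0,\ldots,p^{s-1}-1\}$ are distinct, $c-c'$ is a nonzero element of $(-p^{s-1},p^{s-1})$ and therefore cannot be divisible by $p^{s-1}$, so every inner sum vanishes and $I=0$. If $i\neq j$, put $u_l=[L(H)]_{i,l}-[L(H)]_{j,l}\bmod k$ with canonical form $u_l=\bar A_l p^s+\bar B_l m$, and set $\epsilon=c-c'\bmod p^{s-1}$. Using $\zeta_p^{\gamma_l}=\zeta_{p^s}^{\bar B_l+c-c'}$ together with $\zeta_m^{\bar A_l}\zeta_{p^s}^{\bar B_l}=\zeta_k^{u_l}$, a direct computation yields
\begin{equation*}
I = p^{s-1}\,\zeta_{p^s}^{c-c'}\sum_{\substack{1\le l\le n \\ \bar B_l\equiv -\epsilon\,(\bmod\,p^{s-1})}}\zeta_k^{u_l}.
\end{equation*}
Expanding the restriction by the standard discrete Fourier identity on $\Z_{p^{s-1}}$ rewrites the remaining sum as $\tfrac{1}{p^{s-1}}\sum_{\tau=0}^{p^{s-1}-1}\zeta_{p^{s-1}}^{\tau\epsilon}\,T_\tau$, where $T_\tau:=\sum_{l=1}^n \zeta_{p^{s-1}}^{\tau\bar B_l}\zeta_k^{u_l}$. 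Since $\zeta_{p^{s-1}}^{\tau\bar B_l}\zeta_{p^s}^{\bar B_l}=\zeta_{p^s}^{\bar B_l(1+\tau p)}$, and the Chinese Remainder Theorem produces a unique $r\in\Z_k^*$ with $r\equiv 1\bmod m$ and $r\equiv 1+\tau p\bmod p^s$ (coprimality follows from $\gcd(m,p)=1$ and $1+\tau p\equiv 1\bmod p$), one checks that $T_\tau=\sum_l\zeta_k^{r u_l}$.

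The heart of the argument, and the step I anticipate as the main obstacle, is showing $T_\tau=0$ for every $\tau$. I would deduce this from a Galois-invariance observation: for any $r$ coprime to $k$ the Galois automorphism $\sigma_r\colon\zeta_k\mapsto\zeta_k^r$ of $\Q(\zeta_k)/\Q$ commutes with complex conjugation (because $\sigma_r(\zeta_k^{-1})=\zeta_k^{-r}=\overline{\sigma_r(\zeta_k)}$), so applying $\sigma_r$ entrywise to $HH^*=nI_n$ produces $H^{(r)}(H^{(r)})^*=nI_n$, where $H^{(r)}$ denotes the entrywise $r$-th power of $H$. Consequently, for $i\neq j$ the off-diagonal entry $(H^{(r)}(H^{(r)})^*)_{ij}=\sum_l\zeta_k^{r u_l}=T_\tau$ vanishes, which completes the proof.
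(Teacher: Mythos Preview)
Your proof is correct, and it takes a genuinely different route from the paper's argument. The paper establishes Theorem \ref{general morphism} by decomposing the vanishing inner product $\sum_{l}\zeta_k^{u_l}=0$ of two rows of $H$ into \emph{minimal} vanishing subsums and then invoking the Lam--Leung theorem (Lemma \ref{Lam}) so that each minimal piece can be rotated to consist of $n_0$-th roots of unity with $n_0$ squarefree; Lemmas \ref{x-psi lemma} and 2.21 together with Corollary \ref{x-y corollary} then show that the image of each minimal piece under $\Psi$ still vanishes, and the block decomposition handles rows coming from distinct shifts $L(H)+rmJ$.

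Your approach bypasses Lam--Leung entirely. After the same local analysis of the inner $t$-sum, you reduce everything to showing that the restricted sums $\sum_{\bar B_l\equiv -\epsilon}\zeta_k^{u_l}$ vanish, Fourier-expand the congruence restriction, and identify each Fourier coefficient $T_\tau$ with an off-diagonal entry of $H^{(r)}(H^{(r)})^*$ for a suitable unit $r\in\Z_k^*$. The vanishing then follows from the elementary and well-known fact that Galois conjugates of Butson matrices are Butson (your observation that $\sigma_r$ commutes with complex conjugation because $\mathrm{Gal}(\Q(\zeta_k)/\Q)$ is abelian). This is cleaner and more self-contained: it replaces a nontrivial structural theorem about vanishing sums with a one-line Galois argument, and it treats all pairs of rows uniformly rather than separating the ``same block'' and ``different block'' cases. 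The paper's approach, on the other hand, makes the connection to the arithmetic of vanishing sums explicit, which is of independent interest.
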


Repeated application of $\Psi_{p}$ for all primes $p$ dividing $k$ gives the following.

\begin{corollary}
If there exists a $\BH(n,k)$ where $k = p_{1}^{s_{1}}\cdots p_{r}^{s_{r}}$, then there exists a $\BH(nk/\ell,\ell)$ where $\ell = p_{1}\cdots p_{r}$.
\end{corollary}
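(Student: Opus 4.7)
The plan is to prove the corollary by straightforward iteration of Theorem \ref{general morphism}, peeling off one prime at a time from the phase. Induction on $r$ (the number of distinct primes dividing $k$) is the natural framework: the base case $r=1$ is Theorem \ref{general morphism} itself applied with $m=1$ and $s=s_{1}$, producing a $\BH(np_{1}^{s_{1}-1},p_{1})$ from a $\BH(n,p_{1}^{s_{1}})$.

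For the inductive step I would start from a $\BH(n,k)$ with $k=p_{1}^{s_{1}}\cdots p_{r}^{s_{r}}$ and apply Theorem \ref{general morphism} with the choice $p=p_{1}$, $s=s_{1}$ and $m=p_{2}^{s_{2}}\cdots p_{r}^{s_{r}}$. This $m$ is coprime to $p_{1}$, so the hypothesis is satisfied, and the theorem yields a $\BH\!\bigl(np_{1}^{s_{1}-1},\,p_{1}\cdot p_{2}^{s_{2}}\cdots p_{r}^{s_{r}}\bigr)$. I would then apply $\Psi_{p_{2}}$ to this matrix with $s=s_{2}$ and the new $m=p_{1}p_{3}^{s_{3}}\cdots p_{r}^{s_{r}}$, and continue prime by prime. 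After $r$ applications the order becomes $n\prod_{i=1}^{r}p_{i}^{s_{i}-1}=nk/\ell$ and the phase becomes $\prod_{i=1}^{r}p_{i}=\ell$, giving precisely the stated $\BH(nk/\ell,\ell)$.

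The bookkeeping verification I would explicitly spell out is that, after applying $\Psi_{p_{j}}$ for $j<i$, the current phase still contains $p_{i}$ with multiplicity exactly $s_{i}$ and that the cofactor is coprime to $p_{i}$. Each $\Psi_{p_{j}}$ only replaces the factor $p_{j}^{s_{j}}$ in the phase by $p_{j}$, leaving the $p_{i}$-part untouched for all $i\neq j$; this makes the hypotheses of Theorem \ref{general morphism} continuously available along the chain. In particular the order in which the primes are processed is immaterial.

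The proof is essentially mechanical once Theorem \ref{general morphism} is granted, so there is no real obstacle beyond careful accounting. The only subtlety worth flagging is that one must know the current phase \emph{exactly} at each step in order to identify the correct $m$ when invoking the theorem; writing the running phase as $p_{i}^{s_{i}}\cdot\bigl(\prod_{j<i}p_{j}\bigr)\bigl(\prod_{j>i}p_{j}^{s_{j}}\bigr)$ before the $i$-th application makes the induction transparent.
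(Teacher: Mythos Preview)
Your proposal is correct and is exactly the approach the paper takes: the corollary is stated immediately after the sentence ``Repeated application of $\Psi_{p}$ for all primes $p$ dividing $k$ gives the following,'' with no further proof given. Your write-up simply spells out this iteration carefully.
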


Before we can prove Theorem \ref{general morphism}, we will need to establish some preliminary results. Hereafter we fix a prime $p$ and let $\Psi = \Psi_{p}$.

\begin{lemma}\label{Psi decomp}
For all $0 \leq x,y < k = mp^{s}$, $\Psi(x-y) = \Psi(x)-\Psi(y)+m\alpha{\bf 1}$ where $\alpha \in \{0,p-1\}$.
\end{lemma}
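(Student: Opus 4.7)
The plan is to unwind both sides using the unique decomposition $x = a_1 p^s + b_1 m$, $y = a_2 p^s + b_2 m$ with $0\le a_i \le m-1$ and $0\le b_i\le p^s-1$, then track the two possible carries that arise when we put $x-y \bmod k$ back into canonical form, and finally invoke Lemma \ref{alpha-diff} to absorb a $\Phi$-difference. I expect no genuine difficulty, only case-bookkeeping, and the key observation that makes everything collapse is that $mp \equiv 0 \pmod{mp}$, so any carry coming out of the $a$-component of $x-y$ is killed the moment we apply $\Psi$ (which scales $a$ by $p$).

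Concretely, I would start from
\[
x-y \equiv (a_1-a_2)p^s + (b_1-b_2)m \pmod{k},
\]
and reduce the right-hand side to canonical form $a p^s + b m$ with $a\in[0,m-1]$, $b\in[0,p^s-1]$. Since $k = mp^s$, this is done by adding $m$ to the $a$-coefficient when $a_1 < a_2$ (a borrow $\epsilon_a \in \{0,1\}$) and adding $p^s$ to the $b$-coefficient when $b_1 < b_2$ (a borrow $\epsilon_b \in \{0,1\}$). Thus
\[
x-y \;=\; \bigl(a_1-a_2+\epsilon_a m\bigr)p^s \;+\; \bigl(b_1-b_2+\epsilon_b p^s\bigr)m \pmod{k},
\]
and $b_1-b_2+\epsilon_b p^s$ is precisely the representative of $b_1-b_2$ in $[0,p^s-1]$, i.e.\ $(b_1-b_2)\bmod p^s$.

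Now I apply the definition of $\Psi$ and compute
\[
\Psi(x-y) \;=\; m\,\Phi\bigl((b_1-b_2)\bmod p^s\bigr) \;+\; (a_1-a_2)p\mathbf{1} \;+\; \epsilon_a\, mp\,\mathbf{1}.
\]
The last summand is $\mathbf{0}$ modulo $mp$, so it disappears. Then Lemma \ref{alpha-diff} gives $\Phi\bigl((b_1-b_2)\bmod p^s\bigr) = \Phi(b_1)-\Phi(b_2)+\alpha\mathbf{1}$ with $\alpha\in\{0,p-1\}$ (in fact $\alpha=0$ iff $b_1\ge b_2$, $\alpha=p-1$ iff $b_1<b_2$, matching $\epsilon_b$). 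Substituting back,
\[
\Psi(x-y) \;=\; \bigl(m\Phi(b_1)+a_1 p\mathbf{1}\bigr) - \bigl(m\Phi(b_2)+a_2 p\mathbf{1}\bigr) + m\alpha\mathbf{1} \;=\; \Psi(x)-\Psi(y)+m\alpha\mathbf{1},
\]
as required.

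The only real obstacle is making sure that the two carries behave independently in the right way: the $b$-carry is what produces the $m(p-1)\mathbf{1}$ defect via Lemma \ref{alpha-diff}, while the $a$-carry is \emph{harmless} because $\Psi$ multiplies the $a$-part by $p$ and we are working modulo $mp$. Once that is laid out the identity falls out of a single substitution, with no case split on the sign of $a_1-a_2$ needed beyond noting that $\epsilon_a mp\,\mathbf{1}\equiv \mathbf{0}$.
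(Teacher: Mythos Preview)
Your proof is correct and follows essentially the same approach as the paper's: decompose $x$ and $y$ via the CRT form $ap^{s}+bm$, compute $\Psi(x-y)$, and invoke Lemma~\ref{alpha-diff} for the $\Phi$-part. In fact you are more careful than the paper, which simply asserts $\Psi(x-y)=(a-c)p\mathbf{1}+m\Phi(b-d)$ without explicitly noting that the $a$-borrow vanishes because $\epsilon_a m p\equiv 0 \pmod{mp}$.
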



\begin{proof}
Let $x = ap^s + bm$ and $y = cp^s + dm$. Observe that $\Psi(x-y) = (a-c)p{\bf{1}} + m\Phi(b-d)$. By Lemma \ref{alpha-diff}, $\Phi(b-d) = \Phi(b) - \Phi(d) + \alpha{\bf 1}$ where $\alpha \in \{0,p-1\}$. The result follows.
\end{proof}

\begin{lemma}\label{x-psi lemma}
Let $z \neq fp^{s-1}$ for any $0 \leq f \leq mp-1$. Then $\sum_{i=1}^{p^{s-1}}\omega^{\Psi(z)_{i}} = 0$ where $\omega$ is a primitive $k^{\rm th}$ root of unity. Otherwise, $\Psi(z) = f{\bf 1}$, and $\sum_{i=1}^{p^{s-1}}\omega^{\Psi(z)_{i}} = p^{s-1}\omega^{f}$.
\end{lemma}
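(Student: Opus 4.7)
The plan is to unpack the definition of $\Psi$ and reduce the sum to a character sum over $p$-th roots of unity. First I would write $z = ap^s + bm$ with $0 \leq a \leq m-1$ and $0 \leq b \leq p^s-1$, and further decompose $b = \tilde a\,p^{s-1} + \tilde b$ with $0 \leq \tilde a \leq p-1$ and $0 \leq \tilde b \leq p^{s-1}-1$. Combining the definitions of $\Phi$ and $\Psi$ gives
\[
\Psi(z) \;=\; m\Phi(b) + ap\mathbf{1} \;=\; m[L(D)]_{\tilde b} + (m\tilde a + ap)\mathbf{1},
\]
so the scalar $\omega^{m\tilde a + ap}$ factors out of the sum and one is left with $\sum_{i=1}^{p^{s-1}} \omega^{m[L(D)]_{\tilde b,\, i}}$.

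Next I would observe that, since $\gcd(m,p) = 1$, the integer $z$ is a multiple of $p^{s-1}$ in $\Z_{mp^s}$ iff $b$ is iff $\tilde b = 0$, so the dichotomy in the statement is exactly $\tilde b = 0$ versus $\tilde b \neq 0$. In the case $\tilde b = 0$, the row $[L(D)]_0$ equals $0 \cdot y^T$ for every column label $y$ and is therefore identically zero, so the sum equals $p^{s-1}\omega^{m\tilde a + ap}$; setting $f := m\tilde a + ap$ one checks directly that $0 \leq f \leq mp-1$ and $z = fp^{s-1}$, matching the stated value $p^{s-1}\omega^f$.

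In the case $\tilde b \neq 0$, I would view $\tilde b$ as a nonzero vector in $\Z_p^{s-1}$ under the lexicographic labeling, so that $y \mapsto \tilde b \cdot y^T$ is a nonzero linear functional $\Z_p^{s-1} \to \Z_p$, hence surjective with kernel of size $p^{s-2}$. Consequently the row $[L(D)]_{\tilde b}$ takes each value in $\Z_p$ exactly $p^{s-2}$ times, and the sum collapses to
\[
p^{s-2}\sum_{j=0}^{p-1}\omega^{mj} \;=\; 0,
\]
which vanishes because $\omega^m$ is a primitive $p$-th root of unity (interpreting $\omega$ in the way consistent with the codomain $\Z_{pm}$ of $\Psi$ — equivalently, reading the exponent $\Psi(z)_i$ through the natural embedding $\Z_{pm}\hookrightarrow\Z_k$). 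The only real subtlety is bookkeeping: keeping the base-$(p^s,m)$ decomposition of $z$ and the base-$p^{s-1}$ decomposition of $b$ aligned with the character-sum cancellation, so that the reduction to the first row of $L(D)$ (when $\tilde b = 0$) versus to a nontrivial equidistributed row (when $\tilde b \neq 0$) is transparent.
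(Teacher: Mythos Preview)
Your argument is correct and follows essentially the same route as the paper: both write $\Psi(z)=m[L(D)]_{j}+\alpha\mathbf{1}$ and split into the cases $j=0$ versus $j\neq 0$, with the paper handling the case $z=fp^{s-1}$ by starting from $f$ and computing $\Psi(fp^{s-1})$ while you run the same computation in reverse. Your equidistribution argument for the nontrivial row of $L(D)$ is exactly what underlies the paper's (terser) claim that $\sum_i\omega^{[L(D)]_{j,i}}=0$, and you are right to flag the interpretation of $\omega$ needed to make $\omega^{m}$ a primitive $p$-th root --- a point the paper leaves implicit.
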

\begin{proof}
First suppose that $z \neq fp^{s-1}$. Observe that $\Psi(z) = m[L(D)]_{j} + \alpha{\bf 1}$ for some $\alpha \in \mathbb{Z}_{pm}$ and $j \neq 0$. Then $\sum_{i=0}^{p^{s-1}-1}\omega^{\Psi(z)_{i}} = \sum_{i=1}^{p^{s-1}}\omega^{[L(D)]_{j,i}+\alpha} = \omega^{\alpha}\sum_{i=0}^{p^{s-1}-1}\omega^{[L(D)]_{j,i}} = 0$.

Now suppose that $z = fp^{s-1}$. Then $f = gm+hp \mod mp$ where $0 \leq g \leq p-1$ and $0 \leq h \leq m-1$. Thus $fp^{s-1} = hp^{s} + gmp^{s-1} \mod p^{s}m$. It follows that $\Psi(z) = hp{\bf 1} + m\Phi(gp^{s-1}) = hp{\bf 1} + gm{\bf 1} = f{\bf 1}$.
\end{proof}


\begin{corollary}\label{x-y corollary}
If $x = fp^{s-1}$ and $y \neq 0 \mod p^{s-1}$, then $\Psi(x-y) = \Psi(x)-\Psi(y) + m(p-1){\bf 1}$.  Consequently, for any multiset $X$ of elements of $\mathbb{Z}_{k}$ such that $x \in X$ only if $x = fp^{s-1}$,and for any $y \neq 0 \mod p^{s-1}$,
then $\sum_{x}\sum_{i=1}^{p^{s-1}}\omega^{\Psi(x-y)_{i}} = 0$.
\end{corollary}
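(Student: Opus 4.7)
The plan is to split the statement into its two parts, handle the algebraic identity first, and then derive the vanishing sum as an immediate corollary.

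First I would tackle the identity $\Psi(x-y) = \Psi(x) - \Psi(y) + m(p-1)\mathbf{1}$. By Lemma~\ref{Psi decomp} applied directly, we already know $\Psi(x-y) = \Psi(x) - \Psi(y) + m\alpha\mathbf{1}$ with $\alpha \in \{0,p-1\}$, so the entire burden is to show that the hypotheses $x = fp^{s-1}$ and $y\not\equiv 0\bmod p^{s-1}$ force $\alpha = p-1$. Tracing through the proof of Lemma~\ref{Psi decomp} back into Lemma~\ref{alpha-diff}, the value $\alpha = p-1$ is selected precisely when the inner $\Phi$-level $b$-components satisfy $b_1' < b_2'$. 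So my task reduces to verifying this inequality from the structural hypotheses.

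Next I would unpack both decompositions explicitly. For $x = fp^{s-1}$, writing $f = gm + hp \bmod mp$ as in the proof of Lemma~\ref{x-psi lemma} yields $x = hp^s + (gp^{s-1})m$, so the $\Psi$-level component is $b_1 = gp^{s-1}$. Decomposing this further as $b_1 = a_1'p^{s-1} + b_1'$ inside $\Z_{p^s}$ gives $a_1' = g$ and, crucially, $b_1' = 0$. For $y = cp^s + dm$ the hypothesis $y\not\equiv 0 \bmod p^{s-1}$ together with $\gcd(m,p)=1$ forces $d \not\equiv 0 \bmod p^{s-1}$; writing $d = a_2'p^{s-1} + b_2'$ therefore gives $b_2' > 0$. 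This is the anticipated main technical point: keeping two layers of decomposition ($\Psi$ on top of $\Phi$) consistent so that we can read off $b_1' = 0 < b_2'$ and conclude $\alpha = p-1$ from Lemma~\ref{alpha-diff}.

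For the second claim I would invoke Lemma~\ref{x-psi lemma}, which guarantees that for every $x = fp^{s-1}$ the vector $\Psi(x) = f\mathbf{1}$ is constant. Combined with the identity just proved, this gives $\Psi(x-y)_i = f + m(p-1) - \Psi(y)_i$ for each coordinate $i$, hence the inner sum factors as $\sum_{i=1}^{p^{s-1}}\omega^{\Psi(x-y)_i} = \omega^{f + m(p-1)} \sum_{i=1}^{p^{s-1}}\omega^{-\Psi(y)_i}$. Since $y \not\equiv 0 \bmod p^{s-1}$, Lemma~\ref{x-psi lemma} gives $\sum_i \omega^{\Psi(y)_i} = 0$, and taking complex conjugates (or applying the lemma to $-y$, which also satisfies $-y \not\equiv 0 \bmod p^{s-1}$) yields $\sum_i \omega^{-\Psi(y)_i} = 0$ as well. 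Thus every inner sum is individually zero, and the double sum over $x \in X$ vanishes term-by-term regardless of the multiplicities in $X$.
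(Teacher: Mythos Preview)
Your proposal is correct and follows essentially the same approach as the paper: invoke Lemma~\ref{Psi decomp} for the identity, then use Lemma~\ref{x-psi lemma} (that $\Psi(x)=f\mathbf{1}$ and that $\sum_i\omega^{\Psi(y)_i}=0$) together with complex conjugation to kill each inner sum. In fact your treatment of the first part is more careful than the paper's, which simply asserts that Lemma~\ref{Psi decomp} gives $\alpha=p-1$ without tracing the two-layer decomposition to verify $b_1'=0<b_2'$ as you do.
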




\begin{proof}
Since $x = fp^{s-1}$, by Lemma \ref{x-psi lemma} we have $\Psi(x) = f{\bf 1}$. Since $y = cp^{s} + dm \neq 0 \mod p^{s-1}$, by Lemma \ref{x-psi lemma} we have $\sum_{i=1}^{p^{s-1}}\omega^{\Psi(y)_{i}} = 0$. Complex conjugation is a field automorphism so it follows too that $\sum_{i=1}^{p^{s-1}}\omega^{-\Psi(y)_{i}} = 0$. It follows from Lemma \ref{Psi decomp} that $\Psi(x-y) = \Psi(x)-\Psi(y)+m(p-1){\bf 1}$, and so $\sum_{i=1}^{p^{s-1}}\omega^{\Psi(x-y)_{i}} = \omega^{f+m(p-1)}\sum_{i=1}^{p^{s-1}}\omega^{-\Psi(y)_{i}} = 0$.
\end{proof}

We will require the following result of Lam and Leung.

\begin{lemma}[Corollary 3.2, \cite{LamLeung}]\label{Lam}
If $\alpha_{1} + \cdots + \alpha_{r} = 0$ is a minimal vanishing sum of $n^{\rm th}$ roots of
unity, then after a suitable rotation, we may assume that all $\alpha_i$'s are $n_{0}^{\rm th}$ roots of
unity where $n_0$ is square-free.
\end{lemma}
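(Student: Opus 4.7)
The plan is to proceed by induction on the ``squareful part'' of $n$, namely $n/n_0$ where $n_0$ is the squarefree radical of $n$. If $n$ is already squarefree the statement is immediate with no rotation required. Otherwise, pick a prime $p$ with $p^2 \mid n$ and write $n = p^a m$ with $a \geq 2$ and $\gcd(p,m) = 1$. The goal at each step is to show that, after multiplying every summand by a single $n$-th root of unity, all summands are in fact $p^{a-1}m$-th roots of unity; then the induction hypothesis applies to the smaller modulus.

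First I would fix a primitive $p^a$-th root of unity $\eta$ and use that the subgroup of $p^{a-1}m$-th roots of unity inside the cyclic group of $n$-th roots has index exactly $p$, with coset representatives $\{1,\eta,\eta^2,\ldots,\eta^{p-1}\}$. Each summand then admits a unique decomposition $\alpha_i = \eta^{c_i}\beta_i$ with $c_i \in \{0,1,\ldots,p-1\}$ and $\beta_i$ a $p^{a-1}m$-th root of unity. Grouping the hypothesis $\sum_i \alpha_i = 0$ according to the value of $c_i$ yields
\[
0 \;=\; \sum_{c=0}^{p-1} \eta^{c}\, S_c, \qquad S_c \;:=\; \sum_{i:\, c_i = c} \beta_i \;\in\; \Z[\zeta_{p^{a-1}m}].
\]

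The key linear-algebraic input is that $[\Q(\zeta_n) : \Q(\zeta_{p^{a-1}m})] = p$ when $a \geq 2$ (via $\phi(p^a)/\phi(p^{a-1}) = p$), and since $\eta$ satisfies $x^p - \eta^p$ over the subfield, this degree forces $x^p - \eta^p$ to be the minimal polynomial of $\eta$, so $\{1,\eta,\ldots,\eta^{p-1}\}$ is a basis for the extension. Linear independence then forces $S_c = 0$ for every $c$; equivalently, every nonempty set $\{\alpha_i : c_i = c\}$ yields a vanishing sub-sum $\eta^c S_c = 0$. By the minimality assumption, no proper vanishing sub-sum exists, so only one value of $c$ (say $c_0$) can actually occur. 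Multiplying all $\alpha_i$ by $\eta^{-c_0}$ -- a rotation -- produces a minimal vanishing sum of $p^{a-1}m$-th roots of unity, and the inductive hypothesis finishes the argument.

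The main delicate point is the cyclotomic degree computation and the basis claim used to force $S_c = 0$; the inequality $a \geq 2$ is essential (for $a = 1$ the extension is trivial and the argument breaks down, which is consistent with the conclusion that $n_0$ need only be squarefree rather than $1$). A secondary point to verify is that the minimality property is preserved under rotation by a nonzero scalar -- this is automatic, since multiplication by $\eta^{-c_0}$ induces a bijection between sub-sums of $\sum \alpha_i$ and sub-sums of $\sum \eta^{-c_0}\alpha_i$ that preserves vanishing.
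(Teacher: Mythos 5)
Your argument is correct. Note, however, that the paper does not prove this lemma at all: it is imported verbatim as Corollary~3.2 of Lam and Leung, so there is no internal proof to compare against. What you have written is a valid self-contained derivation, and it is essentially the classical elementary argument: the decomposition $\alpha_i=\eta^{c_i}\beta_i$ along cosets of $\mu_{p^{a-1}m}$ in $\mu_{p^am}$ is legitimate (the images of $1,\eta,\dots,\eta^{p-1}$ do exhaust the order-$p$ quotient since $\eta$ has order $p^a$), the degree computation $[\Q(\zeta_{p^am}):\Q(\zeta_{p^{a-1}m})]=\phi(p^a)/\phi(p^{a-1})=p$ for $a\ge 2$ is right, and together with $\eta^p\in\Q(\zeta_{p^{a-1}m})$ it does force $\{1,\eta,\dots,\eta^{p-1}\}$ to be a basis, so each $S_c$ vanishes and minimality collapses the sum into a single coset. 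The one point worth spelling out if this were to be included is that $\Q(\zeta_{p^{a-1}m})(\eta)=\Q(\zeta_{p^am})$, which is what upgrades the degree bound to the basis claim. By contrast, Lam and Leung obtain the corollary from their structural results on the group ring $\Z[\mu_n]$ and vanishing sums with nonnegative coefficients; your route is more elementary and proves exactly the statement needed here, at the cost of not yielding their finer information (e.g.\ the characterization of the possible lengths $r$).
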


The sum $\alpha_{1} + \cdots + \alpha_{r} = 0$ is \textit{minimal} if no proper subsums can be zero.  A rotation in this context is a multiplication of the sum by an $n^{\rm th}$ root of unity.

Suppose that for some multiset $X$ of elements of $\mathbb{Z}_{k}$, we have that $\sum_{x}\omega^{x} = 0$ is minimal, and further assume that each $\omega^x$ is an $n_{0}^{\rm th}$ root of unity for $n_{0}$ square-free. Then for each $x \in X$, $x = fp^{s-1}$ for some $f$. Lemma \ref{x-psi lemma} implies that $\sum_{x}\sum_{i=1}^{p^{s-1}}\omega^{\Psi(x)_{i}} = 0$, and then applying Corollary \ref{x-y corollary}, we get that $\sum_{x}\sum_{i=1}^{p^{s-1}}\omega^{\Psi(x-y)_{i}} = 0$ for all $y \neq 0 \mod p^{s-1}$. Any vanishing sum with terms that are not $n_{0}^{\rm th}$ roots of unity can only be scaled so that the terms are all $n_{0}^{\rm th}$ roots of unity by some $\omega^{y}$ where $y \neq 0 \mod p^{s-1}$. Thus we prove the following.

\begin{lemma}
If $\sum_{x}\omega^{x} = 0$ is minimal, then $\sum_{x}\sum_{i=1}^{p^{s-1}}\omega^{\Psi(x)_{i}} = 0$.
\end{lemma}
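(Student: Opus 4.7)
My plan is to formalize the sketch sitting just above the lemma: the argument divides cleanly into one arithmetic reduction (using Lam--Leung) followed by a two-case analysis (using the two preceding results).

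For the reduction, I would invoke Lemma \ref{Lam} on the minimal vanishing sum $\sum_x \omega^x=0$: there is some $y\in \mathbb{Z}_k$ such that $\omega^{x-y}$ is an $n_0$-th root of unity for every $x\in X$, where $n_0\mid k$ is square-free. Writing $k=mp^s$, the $p$-part of a square-free divisor of $k$ is either $1$ or $p$, so $p^{s-1}\mid k/n_0$. The order of $\omega^{x-y}$ in $\langle \omega\rangle$ equals $k/\gcd(k,x-y)$ and must divide $n_0$, which is equivalent to $k/n_0 \mid \gcd(k,x-y)$; hence $p^{s-1}\mid x-y$ for every $x\in X$. The net structural conclusion is that $X-y\subseteq p^{s-1}\mathbb{Z}_k$, and this is the only point in the argument at which I use the minimality assumption.

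Now I would split according to $y\bmod p^{s-1}$. If $y\equiv 0\pmod{p^{s-1}}$, then every $x\in X$ is itself a multiple of $p^{s-1}$, say $x=h_x p^{s-1}$ with $h_x\in \mathbb{Z}_{mp}$, and Lemma \ref{x-psi lemma} gives
\[
\sum_{i=1}^{p^{s-1}}\omega^{\Psi(x)_{i}}=p^{s-1}\omega^{h_x}.
\]
The numerical identity $\omega^{h_x}=\omega^{h_x p^{s-1}}=\omega^x$ (reading $\omega^{p^{s-1}}$ as a primitive $mp$-th root of unity) then turns the target sum into $p^{s-1}\sum_{x\in X}\omega^x$, which vanishes by hypothesis. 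If instead $y\not\equiv 0\pmod{p^{s-1}}$, I write $x=x^{*}-(-y)$ where $x^{*}=x-y\in p^{s-1}\mathbb{Z}_k$ and $-y\not\equiv 0\pmod{p^{s-1}}$, and Corollary \ref{x-y corollary} gives $\sum_{i}\omega^{\Psi(x)_{i}}=0$ already at the level of each individual $x$; summing over $x\in X$ is then vacuous.

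The genuinely novel step is the first one: extracting the divisibility $p^{s-1}\mid x-y$ from the square-free output of Lam--Leung. Once this is in hand, Lemma \ref{x-psi lemma} and Corollary \ref{x-y corollary} each dispatch exactly one of the two cases with no further work. It is worth noting that the vanishing hypothesis on $\sum_x\omega^x$ enters only in the first case, since in the second case the inner sum already vanishes for each $x$ individually.
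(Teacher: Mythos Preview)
Your argument is correct and follows essentially the same route as the paper: invoke Lam--Leung to arrange $X-y\subseteq p^{s-1}\Z_k$, then split on whether $y\equiv 0\pmod{p^{s-1}}$ (equivalently, whether the terms were already $n_0$-th roots before rotation), handling the two cases with Lemma~\ref{x-psi lemma} and Corollary~\ref{x-y corollary} respectively. Your explicit derivation of the divisibility $p^{s-1}\mid x-y$ and your remark that the vanishing hypothesis is used only in the first case are both sharper than the paper's own treatment.
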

\begin{proof}
If the terms $\omega^{x}$ are $n_{0}^{\rm th}$ roots of unity then this is immediate from Lemma \ref{x-psi lemma}. Otherwise, we scale by some $\omega^{y}$ such that $y \neq 0 \mod p^{s-1}$ so that the terms are then $n_{0}^{\rm th}$ roots of unity. Then again we apply Lemma \ref{x-psi lemma} and prove the original equality using Corollary \ref{x-y corollary}.
\end{proof}

Finally, we can prove Theorem \ref{general morphism}.

\begin{proof}
Observe that the rows of $H^{\Psi}$ can be partitioned into $p^{s-1}$ blocks of size $n$ corresponding to the images of the rows of $L(H) + rmJ$ for $0 \leq r \leq p^{s-1}-1$. Given $H \in \BH(n,k)$, the Hermitian inner product of two distinct rows is zero. That is, for any two distinct rows ${\bf x} = [x_{1},\ldots,x_{n}]$ and ${\bf y} = [y_{1},\ldots,y_{n}]$ of $L(H)$, the Hermitian inner product of the corresponding rows of $H$ is of the form
\[
\sum_{i=1}^{n}\omega^{x_{i}-y_{i}} = 0.
\]
Since we can partition this equation into minimal sums, it follows that $\sum_{i=1}^{n}\sum_{j=1}^{p^{s-1}}\omega^{(\Psi(x_{i})-\Psi(y_{i}))_{j}} = 0$. That is, distinct rows of $H^{\Psi}$ from each block of $n$ rows are pairwise orthogonal. To see that two rows taken from distinct blocks are orthogonal, we observe that $tm \neq 0 \mod p^{s-1}$ for any $1 \leq t \leq p^{s-1}-1$, and so we also apply Corollary \ref{x-y corollary}.
\end{proof}

\begin{remark}
{\upshape The application of the map $\Psi_{2}$ to $H \in \BH(n,4)$ is equivalent to a familiar morphism $\BH(n,4) \rightarrow \BH(2n,2)$ of Turyn \cite{Turyn}. That is, for any $H \in \BH(n,4)$, the Hadamard matrix obtained from Turyn's morphism applied to $H$ is Hadamard equivalent to $H^{\Psi_{2}}$.}
\end{remark}

By Proposition \ref{isophi2} we know that $d^{\dag}({\bf x},{\bf y}) = d_{H}(\Psi({\bf x}),\Psi({\bf y}))$. We may also relate the minimum Hamming distance of $\Psi(\mathcal{C})$ directly to the minimum Hamming distance of $\mathcal{C}$, but less precisely.
\begin{proposition}
Let $\mathcal{C}$ be a $BH$-code of minimum Hamming distance $d$ obtained from a $\BH(n,p^{s}m$) with $p$ a prime not dividing $m$. Then the minimum distance $d'$ of $\Psi(\mathcal{C})$ is in the range $d(p-1)p^{s-2} \leq d' \leq dp^{s-1}$.
\end{proposition}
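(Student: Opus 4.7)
The plan is to translate the problem, via Proposition~\ref{isophi2}, into a question about the homogeneous-type weight $\wt^{\dag}$ on $\Z_k$, and then obtain the bounds by bounding $\wt^{\dag}(z)$ for nonzero $z$. More precisely, for any two distinct codewords $\bx,\by\in\mathcal{C}$,
\[
d_{H}(\Psi(\bx),\Psi(\by))=d^{\dag}(\bx,\by)=\sum_{i=1}^{n}\wt^{\dag}(y_{i}-x_{i})=\sum_{i\,:\,x_{i}\neq y_{i}}\wt^{\dag}(y_{i}-x_{i}),
\]
so the image distance splits as a sum of one contribution per disagreeing coordinate. The task therefore reduces to controlling each individual contribution.

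The second step is the purely arithmetic observation that for every $z\in\Z_{k}\setminus\{0\}$, written uniquely as $z=ap^{s}+bm$ with $0\le a\le m-1$ and $0\le b\le p^{s}-1$, the definition of $\wt^{\dag}$ gives
\[
(p-1)p^{s-2}\le \wt^{\dag}(z)\le p^{s-1}.
\]
I would verify this by running through the four cases $(a=0,b=0)$, $(a=0,b\ne 0)$, $(a\ne 0,b=0)$, $(a\ne 0,b\ne 0)$: in all nonzero cases $\wt^{\dag}(z)$ equals either $p^{s-1}$ or $(p-1)p^{s-2}$ (with the latter occurring only when $a=0$ and $b$ is not a multiple of $p^{s-1}$), so both bounds are actually attained.

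Combining these, let $N=d_{H}(\bx,\by)\ge d$ for $\bx\ne\by\in\mathcal{C}$. Summing the per-coordinate bound over the $N$ disagreeing positions yields
\[
N(p-1)p^{s-2}\ \le\ d_{H}(\Psi(\bx),\Psi(\by))\ \le\ Np^{s-1}.
\]
For the lower bound on $d'$, apply the left inequality with $N\ge d$ and minimize over all pairs: $d'\ge d(p-1)p^{s-2}$. For the upper bound, pick a specific pair $\bx,\by\in\mathcal{C}$ achieving $d_{H}(\bx,\by)=d$; then $d'\le d_{H}(\Psi(\bx),\Psi(\by))\le dp^{s-1}$. No step here is delicate — the one thing to be careful about is the case analysis for $\wt^{\dag}$, which is tedious but mechanical; there is no real obstacle.
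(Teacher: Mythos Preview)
Your proposal is correct and follows essentially the same approach as the paper: the paper's proof also bounds the per-coordinate contribution $d_{H}(\Psi(x_i),\Psi(y_i))$ between $(p-1)p^{s-2}$ and $p^{s-1}$ whenever $x_i\neq y_i$, and then sums over the disagreeing positions. Your framing via Proposition~\ref{isophi2} and $\wt^{\dag}$ just makes explicit the isometry step, and your argument for the upper bound (choosing a pair realizing $d_H(\bx,\by)=d$) is slightly more detailed than the paper's, but the underlying idea is identical.
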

\begin{proof}
If $x_{i} \neq y_{i}$, then $p^{s-1}-p^{s-2} \leq d_{H}(\Psi(x_{i}),\Psi(y_{i})) \leq p^{s-1}$. Hence $d_{H}({\bf x},{\bf y})(p-1)p^{s-2} \leq d_{H}(\Psi({\bf x}),\Psi({\bf y})) \leq d_{H}({\bf x},{\bf y})p^{s-1}$.
\end{proof}
\begin{remark}
{\upshape The upper bound above is attainable. For example, the code $\mathcal{C}$ obtained from the Fourier matrix of order $27$ has minimum distance $18$. The code $\Psi(\mathcal{C})$ is a BH-code of length $243$, with minimum distance $162 = 18(3^2)$.}
\end{remark}

\section{Propelinear codes and cocyclic matrices}

The BH-matrix given in Example \ref{ejbhcnl}, $H$, is cocyclic over $\Z_8$ and its BH-code associated  $C_H$ is not linear. Can we define a propelinear structure in $C_H$?
Certainly, we can and this is not an isolated situation.

Let $G$ and $U$ be finite groups, with $U$ abelian, of orders $n$ and $k$, respectively. A map 
$\psi: G\times G\rightarrow U$ such that
\begin{equation}
\label{CocycleIdentity}
\psi(g,h)\psi(gh,k)=\psi(g,hk)\psi(h,k) \quad\forall \, g,h,k\in G
\end{equation}
is a \textit{cocycle} (\textit{over $G$}, \textit{with coefficients 
in $U$}). 
We may assume that $\psi$ is normalized, i.e., $\psi(g,1) = \psi(1,g)=1$ for all $g \in G$. 
For any (normalized) map $\phi: G\rightarrow U$, the cocycle $\partial\phi$  
defined by 
$\partial\phi(g,h)=\phi(g)^{-1}\phi(h)^{-1}\phi(gh)$ is
a \textit{ coboundary}. The set of all cocycles 
$\mbox{$\psi:\abk G\times G\rightarrow U$}$ forms an 
abelian group $Z^2(G,U)$ under pointwise multiplication. 
Factoring out the subgroup of coboundaries gives $H^2(G,U)$,
the \textit{second cohomology group of $G$ with coefficients in $U$}.

Given a group $G$ and $\psi\in Z^2(G, U)$, 
denote by $E_\psi$ the canonical 
central extension of $U$ by 
$G$; this has elements $\{(u,g) \mid u\in U,\; g\in G \}$ and 
multiplication $(u,g) \;(v,h)=(uv\hspace{.5pt} \psi(g,h),gh)$.
The image $U\times \{1\}$ of $U$ lies in the centre of $E_\psi$ and the set $T(\psi)=\{(1,g)\,\colon\, g\in G\}$ is a normalized transversal of $U\times \{1\}$ in $E_\psi$.
In the other direction, suppose that $E$ is a finite group with 
normalized transversal $T$ for a central subgroup 
$U$. Put
$G= E/U$ and 
$\sigma (tU) = t$ for $t\in T$. The map
$\psi_T:G\times G\rightarrow U$ defined by
$\psi_T(g,h)=\sigma(g)\sigma(h)\sigma(gh)^{-1}$ is a cocycle; 
furthermore, $E_{\psi_{T}}\cong E$.

Each cocycle $\psi\in Z^2(G,U)$ is displayed as a 
{\textit cocyclic matrix} $M_\psi$: under some indexing of 
the rows and 
columns by $G$, $M_\psi$ has entry $\psi(g,h)$ in 
position $(g,h)$.

 A $n\times n$ matrix $A=(a_{g,h})_{g,h\in G}$ is called \textit{G-invariant} (or just group invariant) if $a_{gk,hk}=a_{g,h}$ for all $g, h, k\in G$.
\begin{lemma}
If $A$ is $G$-invariant and   $a_{g,h}\in U$ then $\psi(g,h)=a_{g,0}^{-1}a_{g,h^{-1}} a_{0,h^{-1}}^{-1}$ is a cocycle.
\end{lemma}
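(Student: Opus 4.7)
The plan is to show that $\psi$ is in fact a coboundary $\partial\phi$ for a naturally chosen $\phi\colon G\to U$, from which the cocycle identity follows since $U$ is abelian.

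First, I would use the $G$-invariance of $A$ to collapse all entries appearing in the defining formula onto the identity column of $A$. Applying the relation $a_{gk,hk}=a_{g,h}$ with the choice $k=h$ gives $a_{g,h^{-1}}=a_{gh,0}$, and specializing further (first index equal to $0$) gives $a_{0,h^{-1}}=a_{h,0}$. Substituting these into the definition simplifies $\psi$ to
\[
\psi(g,h)=a_{g,0}^{-1}\,a_{gh,0}\,a_{h,0}^{-1}.
\]

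Next, I would define $\phi\colon G\to U$ by $\phi(g):=a_{g,0}$. Because $U$ is abelian, the displayed expression rearranges as $\psi(g,h)=\phi(g)^{-1}\phi(h)^{-1}\phi(gh)=\partial\phi(g,h)$, so $\psi$ is a coboundary and in particular a cocycle. Equivalently, one may verify the cocycle identity directly: both $\psi(g,h)\psi(gh,k)$ and $\psi(g,hk)\psi(h,k)$ telescope in the abelian group $U$ to $a_{g,0}^{-1}\,a_{h,0}^{-1}\,a_{ghk,0}\,a_{k,0}^{-1}$, giving the required equality.

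There is essentially no serious obstacle here: the argument reduces to bookkeeping once one observes that $G$-invariance forces every entry of $A$ to agree with one in the identity column, so the content of the lemma is really just the standard fact that coboundaries are cocycles. The only care needed is to invoke commutativity of $U$ when reordering the three factors into the standard coboundary form.
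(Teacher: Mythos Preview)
Your argument is correct. The paper states this lemma without proof, so there is no authorial argument to compare against; your reduction via $G$-invariance to $\psi(g,h)=\phi(g)^{-1}\phi(h)^{-1}\phi(gh)$ with $\phi(g)=a_{g,0}$ is clean and in fact shows slightly more than asserted, namely that $\psi=\partial\phi$ is a coboundary, which dovetails with the remark the paper places immediately after the lemma.
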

\begin{remark}
{\upshape Every group invariant matrix with entries in $U$ is equivalent to a cocyclic matrix.}
\end{remark}

Fixing  $U=\langle \zeta_k\rangle$. A cocycle $\psi\in Z^2(G,\langle \zeta_k\rangle)$ is called \textit{ orthogonal} if, for each $g\neq 1\in G, \,\sum_{h\in G}\psi(g,h)=0$.  

\begin{proposition}\cite{Hor07}
$H_\psi\in \BH(n,k)$ if and only if $\psi\in Z^2(G,\langle \zeta_k\rangle)$ is orthogonal.
\end{proposition}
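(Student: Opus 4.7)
The plan is to compute the Hermitian inner product of any two rows of $M_\psi$ directly and rewrite it, via the cocycle identity, as a single sum of the form $\sum_{h'\in G}\psi(g,h')$ for a specific $g\in G$.

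First I would fix two distinct row indices $g_1,g_2\in G$ and examine the inner product
\[
\langle R_{g_1},R_{g_2}\rangle = \sum_{h\in G}\psi(g_1,h)\overline{\psi(g_2,h)} = \sum_{h\in G}\psi(g_1,h)\psi(g_2,h)^{-1},
\]
using that entries lie in $\langle\zeta_k\rangle$ so $\overline{\psi(g_2,h)}=\psi(g_2,h)^{-1}$. Setting $g:=g_1g_2^{-1}$, the cocycle identity \eqref{CocycleIdentity} applied at $(g,g_2,h)$ yields $\psi(g,g_2)\psi(g_1,h)=\psi(g,g_2h)\psi(g_2,h)$, hence
\[
\psi(g_1,h)\psi(g_2,h)^{-1}=\psi(g,g_2)^{-1}\psi(g,g_2h).
\]
Summing over $h\in G$ and substituting $h'=g_2h$ (which is a bijection of $G$),
\[
\langle R_{g_1},R_{g_2}\rangle = \psi(g,g_2)^{-1}\sum_{h'\in G}\psi(g,h').
\]

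Now I would split into the two directions. If $\psi$ is orthogonal, then for $g_1\neq g_2$ we have $g=g_1g_2^{-1}\neq 1$, so $\sum_{h'}\psi(g,h')=0$ and the two rows are orthogonal; when $g_1=g_2$, the inner product collapses to $n$ because each term equals $1$. Combined with the fact that entries of $M_\psi$ lie in $\langle\zeta_k\rangle$ by definition, this gives $M_\psi M_\psi^{*}=nI_n$, i.e.\ $H_\psi\in\BH(n,k)$. Conversely, if $H_\psi\in\BH(n,k)$, I would compare any row $g\neq 1$ with the row indexed by the identity $1\in G$: since $\psi$ is normalized, $\psi(1,h)=1$ for all $h$, so
\[
0=\langle R_g,R_1\rangle = \sum_{h\in G}\psi(g,h),
\]
which is exactly the orthogonality of $\psi$.

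The only genuinely non-routine step is the manipulation of the cocycle identity in the first paragraph; everything else is bookkeeping. I expect this to be the single place where care is needed, in particular choosing the right triple $(g,g_2,h)$ in \eqref{CocycleIdentity} so that the cross terms telescope into a single factor $\psi(g,g_2)^{-1}$ outside the sum.
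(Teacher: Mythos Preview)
Your argument is correct: the key computation reducing $\langle R_{g_1},R_{g_2}\rangle$ to $\psi(g,g_2)^{-1}\sum_{h'}\psi(g,h')$ via the cocycle identity at $(g,g_2,h)$ is exactly the standard manipulation, and the converse via comparison with the identity row is clean. Note that the paper does not actually prove this proposition---it is quoted from \cite{Hor07}---so there is no ``paper's own proof'' to compare against; what you have written is essentially the classical proof one finds in that reference.
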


\noindent{\bf Fact:}
 A cocyclic Butson Hadamard matrix is not necessarily pairwise row and column balanced.

 
 \begin{proposition} \label{pfromctocg}
Given $\psi\in Z^2(G,\langle \zeta_k\rangle)$ and ${\bf x}=\zeta_k^\lambda\,[\psi(g,g_1),\ldots,\psi(g,g_n)]$ for a fixed order in $G=\{g_1=1,g_2,\ldots,g_{n}\}$.
Define $\pi_{\bf x}\in {\cal S}_n$ so that $\pi_{\bf x}^{-1}(j)=k$ where $g_k=g g_j$. Then 
\begin{enumerate}
    \item  ${\bf x}+ \pi_{\bf x}(y)=\zeta_k^{\lambda+\mu}\,\psi(h,g)\,[\psi(hg,g_1),\ldots,\psi(hg,g_n)]$   where $+$ means the componentwise product and ${\bf y}=\zeta_k^\mu\,[\psi(h,g_1),\ldots,\psi(h,g_n)]$. 
    \item $\pi_{{\bf x}+\pi_{\bf x}({\bf y})}=\pi_{\bf x}(\pi_{\bf y}).$
 
 \end{enumerate}

\end{proposition}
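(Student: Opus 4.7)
The plan is to establish part 1 by a direct componentwise computation using the cocycle identity \eqref{CocycleIdentity}, and then to deduce part 2 from part 1 by reading off which group element labels the combined vector and invoking the defining rule for $\pi$ on this new vector.

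For part 1, I compare the two sides entry by entry. By the definition of $\pi_{\bf x}$, its inverse sends $j$ to the unique index $i$ with $g_i=g\,g_j$. Using the coordinate permutation convention $(\pi({\bf z}))_j=z_{\pi^{-1}(j)}$, the $j$-th entry of $\pi_{\bf x}({\bf y})$ equals $y_i=\zeta_k^{\mu}\,\psi(h,g_i)=\zeta_k^{\mu}\,\psi(h,g\,g_j)$. Since the $j$-th entry of ${\bf x}$ is $\zeta_k^{\lambda}\,\psi(g,g_j)$ and $+$ denotes componentwise product, the $j$-th entry of ${\bf x}+\pi_{\bf x}({\bf y})$ is $\zeta_k^{\lambda+\mu}\,\psi(g,g_j)\,\psi(h,g\,g_j)$. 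The cocycle identity \eqref{CocycleIdentity} applied to the triple $(h,g,g_j)$ gives $\psi(h,g)\,\psi(hg,g_j)=\psi(h,g\,g_j)\,\psi(g,g_j)$, which rewrites the above as $\zeta_k^{\lambda+\mu}\,\psi(h,g)\,\psi(hg,g_j)$, precisely the $j$-th entry of the right-hand side.

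For part 2, part 1 shows that ${\bf x}+\pi_{\bf x}({\bf y})$ has exactly the form $\zeta_k^{\lambda'}[\psi(g',g_1),\ldots,\psi(g',g_n)]$ that defines vectors labelled by a group element, with $g'=hg$ and $\zeta_k^{\lambda'}=\zeta_k^{\lambda+\mu}\,\psi(h,g)$. The defining rule for the permutation attached to such a vector therefore forces $\pi_{{\bf x}+\pi_{\bf x}({\bf y})}^{-1}(j)$ to be the index of $(hg)\,g_j$. On the other hand, $(\pi_{\bf x}\circ\pi_{\bf y})^{-1}=\pi_{\bf y}^{-1}\circ\pi_{\bf x}^{-1}$ sends $j$ first to the index of $g\,g_j$ and then, applying $\pi_{\bf y}^{-1}$, to the index of $h\,(g\,g_j)$; associativity in $G$ identifies the two prescriptions.

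The only real obstacle is keeping the permutation conventions straight: depending on whether one writes $(\pi({\bf y}))_j=y_{\pi^{-1}(j)}$ or $y_{\pi(j)}$, one recovers either $\pi_{\bf x}\circ\pi_{\bf y}$ or its reverse, and likewise the three arguments in the cocycle identity must be matched in the correct order. Once the convention is fixed to match Definition \ref{PropDef} and the indexing in the statement, both parts reduce to short manipulations driven by \eqref{CocycleIdentity} and associativity in $G$.
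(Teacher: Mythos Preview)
Your proof is correct and follows essentially the same approach as the paper: part~1 is identical (componentwise computation plus the cocycle identity applied to $(h,g,g_j)$), and for part~2 the paper verifies the permutation equality by acting on a test vector ${\bf z}$ while you compare the inverses directly on indices, which is a cosmetic variation of the same argument.
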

\begin{proof}
\begin{enumerate}
    \item Observe that $\pi_{\bf x}({\bf y}) = \zeta_{k}^{\mu}[\psi(h,gg_{1}),\ldots,\psi(h,gg_{n})]$. Hence the $i^{\rm th}$ component of ${\bf x} + \pi_{\bf x}({\bf y})$ is $\zeta_{k}^{\lambda + \mu}\psi(g,g_{i})\psi(h,gg_{i})$. Apply \eqref{CocycleIdentity} letting $(g,h,k) = (h,g,g_{i})$ and the result follows.
    \item Let ${\bf z}=\zeta_k^{\gamma}\,[\psi(\ell,g_1),\ldots,\psi(\ell,g_n)]$. From part 1 we know that ${\bf x} + \pi_{\bf x}({\bf y})$ is a scalar multiple of the $n$-tuple defined by $\psi(hg,-)$, and thus the $j^{\rm th}$ component of $\pi_{{\bf x}+\pi_{\bf x}({\bf y})}({\bf z})$ is $\psi(\ell,hgg_{j})$.  Now observe that the $k^{\rm th}$ component of $\pi_{{\bf y}}({\bf z})$ is $\psi(\ell,hg_{k})$. We have $\pi_{\bf x}(k) = j$ where $g_{k} = gg_{j}$, and thus the $j^{\rm th}$ component of $\pi_{\bf x}(\pi_{\bf y}({\bf z}))$ is $\psi(\ell,hg_{k}) = \psi(\ell,hgg_{j})$. 
\end{enumerate}
\end{proof}
\begin{corollary}\label{CMisBHFP}
Let $\psi\in Z^2(G,\langle \zeta_k\rangle)$ and $H_\psi\in \BH(n,k)$. Then the corresponding BH-code $C_H$ is a BHFP-code where
${\bf x}\star {\bf y}= {\bf x}+ \pi_{\bf x}({\bf y})$ for all ${\bf x},{\bf y} \in \mathcal{C}$.
\end{corollary}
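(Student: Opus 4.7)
The plan is to identify codewords of $C_H$ with pairs $(g,\alpha) \in G \times \Z_k$ via ${\bf x}_{g,\alpha} = \zeta_k^{\alpha}[\psi(g,g_1),\ldots,\psi(g,g_n)]$ (equivalently, $[\varphi(g,g_i)]_{i=1}^n + \alpha{\bf 1}$ in logarithmic form). This correspondence is a bijection because distinct rows of $H$ are distinct (linear independence from Butsonness) and the first column is constant $1$, so the first entry pins down $\alpha$. In particular, ${\bf 0}$ corresponds to $(1,0)$, and Proposition \ref{pfromctocg}(1) shows that ${\bf x}_{g,\alpha} + \pi_{{\bf x}_{g,\alpha}}({\bf x}_{h,\beta})$ is again of the form ${\bf x}_{hg,\gamma}$ for some $\gamma$; this already gives closure of the operation $\star$ and, in particular, $(\sigma_{\bf x},\pi_{\bf x})(C_H) = C_H$.

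Next, I would verify the two conditions of Definition \ref{PropDef}. Setting $\sigma_{\bf x}({\bf v}) = {\bf x}+{\bf v}$ (componentwise addition), we have $(\sigma_{\bf x},\pi_{\bf x})({\bf 0}) = {\bf x} + \pi_{\bf x}({\bf 0}) = {\bf x}$, giving (i). For (ii), let ${\bf z} = {\bf x}\star {\bf y}$. The composition rule in $\Aut(\Z_k^n)$ gives, for each coordinate $i$, $\sigma_{{\bf x},i} \circ \sigma_{{\bf y},\pi_{\bf x}^{-1}(i)}(v) = x_i + y_{\pi_{\bf x}^{-1}(i)} + v$, and since $z_i = x_i + [\pi_{\bf x}({\bf y})]_i = x_i + y_{\pi_{\bf x}^{-1}(i)}$, this is exactly $\sigma_{{\bf z},i}(v)$. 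Combined with Proposition \ref{pfromctocg}(2), which supplies $\pi_{\bf z} = \pi_{\bf x}\circ \pi_{\bf y}$, condition (ii) follows.

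Finally, to establish full propelinearity (Definition \ref{FullPropDef}), I need to check the fixed-point condition on $\pi_{\bf a}$. For ${\bf a} = {\bf x}_{g,\alpha}$, the construction gives $\pi_{\bf a}(k) = j$ iff $g_k = g g_j$, so $k$ is a fixed point iff $g = 1$. Thus $\pi_{\bf a} = Id_n$ when $g=1$, and $\pi_{\bf a}$ is fixed-point-free when $g \neq 1$. It remains to match this dichotomy with the codeword-type dichotomy ${\bf a} = \alpha{\bf 1}$ vs.\ ${\bf a} \neq \alpha{\bf 1}$. When $g=1$ (and $\psi$ is normalized), all entries of ${\bf x}_{1,\alpha}$ equal $\alpha$, so ${\bf a} = \alpha{\bf 1}$. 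Conversely, if $g \neq 1$, orthogonality of $\psi$ forces $\sum_i \psi(g,g_i) = 0$, which precludes $\psi(g,g_i)$ from being constant in $i$ (a nonzero constant times $n$ is nonzero), so ${\bf a}$ is not of the form $\alpha{\bf 1}$.

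I expect the main subtlety to be the bookkeeping in step 2: correctly interpreting the right-to-left composition convention $(\sigma,\pi)\circ(\sigma',\pi')$ of $\Aut(\Z_k^n)$ and matching it against the additive structure $\sigma_{\bf x}({\bf v})={\bf x}+{\bf v}$. Once this is set up carefully, everything reduces to Proposition \ref{pfromctocg} plus the elementary observation linking constant rows of $H$ to the orthogonality hypothesis.
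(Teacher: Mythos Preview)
Your proof is correct and follows essentially the same route as the paper: extend $\pi_{\bf x}$ to all of $C_H$ via the identification with pairs $(g,\alpha)$, invoke Proposition~\ref{pfromctocg} for the propelinear structure, and then check the fixed-point condition of Definition~\ref{FullPropDef} via the left-regular action of $G$. Your version is more explicit than the paper's in two places: you spell out the verification of condition~(ii) of Definition~\ref{PropDef} for the $\sigma$-component (the paper simply cites Proposition~\ref{pfromctocg}), and you use orthogonality of $\psi$ to justify that a row indexed by $g\neq 1$ cannot be constant, which the paper leaves implicit.
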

\begin{proof}
Extend the definition of $\pi_{\bf x}$ for the rows ${\bf x}$ of $L(H_{\psi})$ to all of $C_{H}$ by letting $\pi_{{\bf x} + \alpha {\bf 1}} = \pi_{\bf x}$ for all $\alpha \in \Z_{k}$. The code ${C}_H$ is propelinear by Proposition \ref{pfromctocg}, and since ${\bf x}\star {\bf y}= {\bf x}+ \pi_{\bf x}({\bf y})$ for all ${\bf x},{\bf y} \in \mathcal{C}$, the first property of Definition \ref{FullPropDef} is satisfied. Finally observe that because $\pi_{\bf x}\in {\cal S}_n$ is defined so that $\pi_{\bf x}^{-1}(j)=k$ where $g_k=g g_j$, it follows that $\pi_{\bf x}$ fixes no coordinate when ${\bf x} \neq \alpha {\bf 1}$, and $\pi_{\alpha {\bf 1}} = Id_{\mathcal{S}_{n}}$ for all $\alpha \in \Z_{k}$.
\end{proof}

\begin{remark}
{\upshape A notorious class of cocyclic Butson matrices are those that are equivalent to group invariant (if $G$ is a cyclic group, they are called circulant Butson matrices). A construction method based on bilinear forms on finite abelian groups is given in \cite{DS19} which, in turn, provides BHFP-codes. Furthermore, for $G$ abelian it is known that Bent functions, group invariant generalized Hadamard matrices and abelian semiregular relative different sets are all either equivalent to group invariant Butson matrices or to group invariant Butson matrices with additional properties  (see \cite{Sch19}). Characterising group invariant Butson matrices in terms of BHFP codes is an open problem.} 
\end{remark}

We refer the reader to \cite[Section 3]{ABE19} for a detailed discussion on cocyclic generalized Hadamard matrices and the corresponding generalized Hadamard full propelinear codes. Rather than repeat this discussion, we note that the converse of Corollary \ref{CMisBHFP} holds under the assumption that the $\BH(n,k)$ is row and column balanced. A $\BH(n,p)$ is necessarily balanced, and is equivalent to a generalized Hadamard matrix over the cyclic group $C_{p}$ when $p$ is prime. 
\begin{corollary}
Let $C_H$ be a BHFP-code of length $n$ over $\Z_k$ coming from $H \in \BH(n,k)$, where $H$ is row and column balanced. Then $H$ is cocyclic.
\end{corollary}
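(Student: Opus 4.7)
My plan is to realise $H$ explicitly as Hadamard-equivalent to a cocyclic matrix $M_{\psi}$, where $\psi$ is the cocycle arising from the central extension that the BHFP hypothesis puts on $C_{H}$.

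First I would form $E:=(C_{H},\star)$, a group of order $kn$, and identify the subset $U:=\{\alpha{\bf 1}:\alpha\in\Z_{k}\}$ as a central subgroup isomorphic to $\Z_{k}$: for $\alpha{\bf 1}\in U$ and ${\bf x}\in C_{H}$, centrality follows from $\pi_{\alpha{\bf 1}}=\mathrm{Id}_{n}$ together with $\pi_{\bf x}({\bf 1})={\bf 1}$, which jointly give both $\alpha{\bf 1}\star{\bf x}=\alpha{\bf 1}+{\bf x}$ and ${\bf x}\star\alpha{\bf 1}={\bf x}+\alpha{\bf 1}$. Thus $G:=E/U$ has order $n$, and we have a central extension $1\to U\to E\to G\to 1$, so $E\cong E_{\psi}$ for some $\psi\in Z^{2}(G,\langle\zeta_{k}\rangle)$.

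Next I would pick the natural normalised transversal, namely the rows ${\bf r}_{1}={\bf 0},{\bf r}_{2},\ldots,{\bf r}_{n}$ of $L(H)$: they are pairwise distinct and all have first coordinate $0$ (dephased form), hence lie in $n$ distinct cosets of $U$. Writing $\sigma(g)$ for the row in coset $g$, I define $\psi(g,h)\in U\cong\Z_{k}$ by
$$\sigma(g)\star\sigma(h)=\psi(g,h){\bf 1}+\sigma(gh).$$
Applying the BHFP identity $\sigma(g)\star\sigma(h)=\sigma(g)+\pi_{\sigma(g)}(\sigma(h))$ and reading the first coordinate (which vanishes for all rows of $L(H)$) produces the closed-form expression $\psi(g,h)=\sigma(h)_{q_{g}}$, where $q_{g}:=\pi_{\sigma(g)}^{-1}(1)$.

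The crucial combinatorial step is to show that $g\mapsto q_{g}$ is a bijection $G\to\{1,\ldots,n\}$. Because ${\bf x}\mapsto\pi_{\bf x}$ is a group homomorphism $(C_{H},\star)\to\mathcal{S}_{n}$, whenever $q_{g_{1}}=q_{g_{2}}$ the permutation $\pi_{\sigma(g_{1})^{-1}\star\sigma(g_{2})}$ fixes that coordinate; the BHFP axiom then forces $\sigma(g_{1})^{-1}\star\sigma(g_{2})\in U$, so $g_{1}=g_{2}$. With $q$ a bijection, the identity $M_{\psi}(g,h)=L(H)_{h,q_{g}}$ exhibits $M_{\psi}$ as a permutation of the rows of $L(H)^{T}$, so the cocyclic matrix attached to $\psi$ is Hadamard-equivalent to $H^{T}$; in particular $\psi$ is orthogonal and $M_{\psi}\in\BH(n,k)$. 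Transposing once more realises $H$ as the cocyclic matrix for $\tilde\psi(g,h):=\psi(h,g)$ on $G^{\mathrm{op}}\cong G$, proving that $H$ is cocyclic.

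I expect the main obstacle to be the bijectivity of $q$, because it ties together all the ingredients (the propelinear homomorphism, the no-fixed-point axiom, and the coset structure of $E/U$). The row-and-column balance hypothesis on $H$ is what reconciles the coset decomposition of $E$ with the row/column incidence structure of $L(H)$: it forces each row and column of $L(H)$ to contain every element of $\Z_{k}$ with the multiplicity predicted by the extension, which is precisely what is needed for $M_{\psi}$ to line up with $L(H)^{T}$ in the required way and to guarantee orthogonality of $\psi$ without any case analysis. Without balance the construction can break down, as observed in the ``Fact'' preceding Proposition~\ref{pfromctocg}.
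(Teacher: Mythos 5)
Your argument is correct, and it is genuinely different from what the paper does: the paper's proof is a one-line deferral to Proposition 4 and Corollary 2 of the cited work on generalized Hadamard full propelinear codes, where the route is to show that the transversal of rows yields a central relative difference set in $(C_H,\star)$ relative to $U=\{\alpha{\bf 1}\}$ and then to invoke the known correspondence between such difference sets and orthogonal cocycles; the balance hypothesis is what makes that difference-set counting go through after identifying a balanced $\BH(n,k)$ with a generalized Hadamard matrix over $\Z_k$. You instead build the transversal cocycle $\psi(g,h)=\sigma(g)\star\sigma(h)\star\sigma(gh)^{-1}$ directly, read off the first (dephased) coordinate to get $\psi(g,h)=\sigma(h)_{q_g}$, and use the no-fixed-coordinate axiom together with the fact that ${\bf x}\mapsto\pi_{\bf x}$ is a homomorphism to show $g\mapsto q_g$ is injective, hence bijective. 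This exhibits $M_\psi$ as a row permutation of $L(H)^{T}$, so orthogonality of $\psi$ comes for free from $H^{T}\in\BH(n,k)$ rather than from any counting; the individual steps all check out ($U$ is central of order $k$ because $\pi_{\alpha{\bf 1}}=Id_n$ and constant vectors are fixed by coordinate permutations, the rows of the dephased $L(H)$ lie in distinct cosets, and the passage to $G^{\op}$ for the transpose is legitimate). Your approach buys a self-contained argument with the orthogonality built in; the paper's buys the stronger structural link to relative difference sets.

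The one thing you must fix is the closing paragraph. As written, your proof never actually invokes the row-and-column balance hypothesis --- every step is justified from the full propelinear axioms and $H\in\BH(n,k)$ alone --- yet you assert that balance is ``precisely what is needed'' for $M_\psi$ to line up with $L(H)^{T}$ and for orthogonality. That assertion is not substantiated by any step of your argument, and the ``Fact'' you cite concerns the forward direction (cocyclic does not imply balanced), not the converse. Either identify the exact step that fails without balance, or state explicitly that your argument dispenses with that hypothesis and therefore proves a stronger statement than the corollary; leaving the claim as vague reassurance is the weakest part of an otherwise sound proof.
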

\begin{proof}
The proof follows the proof of Proposition 4 and Corollary 2 of \cite{ABE19}.
\end{proof}

Let $H$ be a $\BH(n,k)$. We consider the following partition of its corresponding code. $C_{H} = \cup_{1 \leq \alpha \leq n}C_\alpha$ where $C_{\alpha} = \{[L(H)]_{\alpha} + \lambda {\bf 1}\}_{\lambda \in \Z_{k}}$ and $[L(D)]_i$ denotes the $i$-th row of $L(D)$.
\begin{example}\label{PropStructureEx}{\upshape
Let $H$ be the BH-matrix of Example \ref{ejbhcnl} since it is cocyclic over $\Z_8$. Then,

$$C_H=C_1\cup C_2\cup \ldots \cup C_8$$
can be endowed with a {\bf full propelinear structure} with the following group $\Pi$ of permutations
$$\pi_{\bf x}=\left\{
\begin{array}{cc}
I & \quad {\bf x}\in C_1\\
(1,2,3,4,5,6,7,8) &\quad {\bf x}\in C_2\\
(1,3,5,7) (2,4,6,8) &\quad  {\bf x}\in C_3\\
(1, 4, 7, 2, 5, 8, 3, 6)& \quad {\bf x}\in C_4\\
(1, 5)(2, 6)(3, 7)(4, 8)&\quad {\bf x}\in C_5\\
(1, 6, 3, 8, 5, 2, 7, 4)&\quad {\bf x}\in C_6\\
(1, 7, 5, 3)(2, 8, 6, 4)
&\quad {\bf x} \in C_7\\
(1, 8, 7, 6, 5, 4, 3, 2)&\quad {\bf x}\in C_8
\end{array}
\right.$$
$C_H$ is a BHFP-code with group structure $\Z_8 \times \Z_4$ and $\Pi\cong \Z_8$. The codewords are
\begin{align*}
C_1 &= \{ [0,0,0,0,0,0,0,0] + \lambda {\bf 1}\},\\
C_2 &= \{ [0,1,3,0,2,3,1,2] + \lambda {\bf 1}\},\\
C_3 &= \{ [0,3,2,1,0,3,2,1] + \lambda {\bf 1}\},\\
C_4 &= \{ [0,0,1,1,2,2,3,3] + \lambda {\bf 1}\},\\
C_5 &= \{ [0,2,0,2,0,2,0,2] + \lambda {\bf 1}\},\\
C_6 &= \{ [0,3,3,2,2,1,1,0] + \lambda {\bf 1}\},\\
C_7 &= \{ [0,1,2,3,0,1,2,3] + \lambda {\bf 1}\},\\
C_8 &= \{ [0,2,1,3,2,0,3,1] + \lambda {\bf 1}\}
\end{align*}
\noindent where $\lambda$ runs through $\Z_{4}$, and $C_H$ is a $(8,32,4)$-code over $\Z_4$.  $C_H$ has a group structure $\Z_8 \times \Z_4 \simeq \langle {\bf a}, {\bf 1} \mid {\bf a}^8= {\bf 1}^4=\zero \rangle$, where ${\bf a}=[0,1,3,0,2,3,1,2]$.}
\end{example}
An interesting family of BH-codes over $\Z_{p^s}$ are those associated to Kronecker products of Fourier matrices. They are denoted by ${\cal H}^{t_1,t_2,\ldots, t_s}$ (see Remark \ref{Defofaddcodes} and Proposition \ref{ACareBH}) and  since these matrices are cocyclic over $G=\Z_{p}^{t_s}\,\times \,\Z_{p^2}^{t_{s-1}}\,\times\,\ldots\,\times \,\Z_{p^{s-1}}^{t_2}\,\times\,\Z_{p^s}^{t_1-1}$, these codes can be endowed with a full propelinear structure by Corollary \ref{CMisBHFP} . Furthermore, for $p=2$ and $s=2$ in \cite{PR97}, it is shown that the image of ${\cal H}^{t_1,t_2}$ under the Gray map are in fact propelinear codes. 

\begin{example}
Considering ${\cal H}^{1,1,1}$, the $\Z_8$-additive code of length $n=8$ associated to $L(H)$ of Example \ref{ejemplosbhadd}.  Then, it can be endowed with a {\bf full propelinear structure} with the following group $\Pi$ of permutations $\Pi\cong \Z_2\times \Z_4$
 generated by $\pi_{\bf x}$ and $\pi_{\bf y}$ where
\[
\begin{array}{cc}
{\bf x} = [0,2,4,6,0,2,4,6], &
{\bf y} = [0,0,0,0,4,4,4,4],\\[2mm]
\pi_{\bf x} = (1,4,3,2)(5,8,7,6), &
\pi_{\bf y} = (1, 5)(2, 6)(3, 7)(4, 8).
\end{array}
\]
The full propelinear code is a group $(\mathcal{H}^{1,1,1},\star ) \cong \Z_{8}\times \Z_{4} \times \Z_{2} = \langle {\bf x}, {\bf y}, {\bf 1} \; | \; {\bf x}^8 = {\bf 0},~ {\bf y}^2 = {\bf 1}^4 = {\bf x}^4 \rangle$.

\end{example}

\section{Propelinear codes via the Gray map}

A natural question that arises is whether or not the generalized Gray preserves the property of being propelinear, or full propelinear. It is certainly true that the number of codewords in a BH-code $C$ obtained from $H$, a $\mathrm{BH}(n,mp^s)$, is the same as the number of codewords in the BH-code $C'$ obtained from $H^{\Psi}$. However, in general, it is not the case that $C'$ will be an isomorphic propelinear structure. A simple example to demonstrate this arises from the $\Z_{9}$-code $C$ obtained from the trivial $\mathrm{BH}(1,9)$, and the $\Z_{3}$-code $\Psi(C)$ obtained from the $\mathrm{BH}(3,3)$ matrix $H' = (1)^{\Psi}$ which written in log form is 
\[
L(H') = \left[
\begin{array}{ccc}
     0 & 0 & 0\\
     0 & 1 & 2\\
      0 & 2 & 1    
\end{array}
\right]
\]
The code $C$ is clearly linear, and as a group is isomorphic to the cyclic group $\Z_{9}$. It is also easily seen to be full propelinear by definition. However it is a short exercise to verify that $\Psi(C)$ cannot be both full propelinear and isomorphic to a cyclic group $G \cong \Z_{9}$ generated by any single element ${\bf x}$, no matter what the coordinate permutation $\pi_{\bf x}$ may be. The code $\Psi(C)$ does form a $2$-dimensional linear code (so it is also propelinear, but not full propelinear with ${\bf x} \star {\bf y} = {\bf x} + {\bf y}$ for all ${\bf x},{\bf y} \in \Psi(C)$), and $\Psi$ is a bijective map between codewords, but in general it is not always the case that $\Psi({\bf x} \star {\bf y}) = \Psi({\bf x}) \star' \Psi({\bf y})$ for any operation $\star'$, and as a consequence $\Psi$ will generally not preserve a group structure. The code $\Psi(C)$ of this example can also be with a full propelinear structure, but it will not be isomorphic as a group to $C$. It is generated by the codewords ${\bf x} = [0,1,2]$, and ${\bf 1}$, where $\pi_{\bf x} = (1,3,2)$. It is isomorphic to $\Z_{3}^2$.

However, we find that for the special case $\Psi_{2} : \mathbb{Z}_{4m} \rightarrow \mathbb{Z}_{2m}^{2}$, we can carefully construct an isomorphism between the groups of codewords $C$ and $C' = \Psi_{2}(C)$, and determine the group operation $\star'$ so that $(C,\star) \cong (C',\star')$. Let $\Psi = \Psi_{2}$.

\begin{theorem}\label{GrayGroup}
Let $m$ be an odd positive integer, and let $C \subseteq \Z_{4m}^{n}$ be a full propelinear code. Then the code $C'= \Psi(C)$ is full propelinear with group structure $(C',\star') \cong (C,\star)$.
\end{theorem}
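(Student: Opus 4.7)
The plan is to transport the group structure of $(C, \star)$ across the Gray map, defining $\star'$ so that $\Psi : (C, \star) \to (C', \star')$ becomes a group isomorphism. By Proposition \ref{isophi2}, $\Psi$ is an injective isometric embedding, hence a bijection onto $C' = \Psi(C)$. Since $C$ is full propelinear, ${\bf x} \star {\bf y} = {\bf x} + \pi_{\bf x}({\bf y})$, so I need permutations $\pi'_{\Psi({\bf x})} \in {\cal S}_{2n}$ satisfying
$$
\Psi({\bf x}) + \pi'_{\Psi({\bf x})}(\Psi({\bf y})) = \Psi({\bf x} + \pi_{\bf x}({\bf y})) \quad \text{for all } {\bf y} \in C.
$$

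The technical crux is the following additivity lemma for $\Psi = \Psi_2$: for all $x, y \in \Z_{4m}$,
$$
\Psi(x+y) = \Psi(x) + \tau_x(\Psi(y)),
$$
where $\tau_x$ is the transposition of the two coordinates of $\Z_{2m}^2$ when $x$ is odd and the identity otherwise. I would prove this by exploiting the CRT isomorphisms $\Z_{4m} \cong \Z_4 \times \Z_m$ and $\Z_{2m}^2 \cong \Z_2^2 \times \Z_m^2$, available because $m$ is odd. Under these, $\Psi$ factors as $\Phi_2 : \Z_4 \to \Z_2^2$ on the first component and the diagonal embedding $a \mapsto 2a \cdot {\bf 1}$ on the second. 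The diagonal embedding is additive and fixed by $\tau$, contributing no defect; on the $\Z_4$ factor, the classical Carlet identity $\Phi_2(b_1 + b_2) = \Phi_2(b_1) + \tau_{b_1}(\Phi_2(b_2))$ is verified by enumerating the four values of $b_1$.

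Now for each ${\bf x} \in C$, let $\tilde{\pi}_{\bf x} \in {\cal S}_{2n}$ be the coordinate-pair lift of $\pi_{\bf x}$ defined by $\tilde{\pi}_{\bf x}(2i - \epsilon) = 2\pi_{\bf x}(i) - \epsilon$ for $\epsilon \in \{0, 1\}$, and let $\tau_{\bf x} \in {\cal S}_{2n}$ be the product of within-pair transpositions $(2i-1, 2i)$ over those indices $i$ with $x_i$ odd. Applying the lemma coordinatewise yields
$$
\Psi({\bf x} + \pi_{\bf x}({\bf y})) = \Psi({\bf x}) + \tau_{\bf x}(\tilde{\pi}_{\bf x}(\Psi({\bf y}))),
$$
so I set $\pi'_{\Psi({\bf x})} := \tau_{\bf x} \circ \tilde{\pi}_{\bf x}$, $\sigma'_{\Psi({\bf x})}({\bf w}) := \Psi({\bf x}) + {\bf w}$, and declare $\Psi({\bf x}) \star' \Psi({\bf y}) := \Psi({\bf x}) + \pi'_{\Psi({\bf x})}(\Psi({\bf y})) = \Psi({\bf x} \star {\bf y})$. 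By construction $\Psi$ is a group homomorphism and Definition \ref{PropDef}(i) holds. For Definition \ref{PropDef}(ii), the required identity $\pi'_{\Psi({\bf x} \star {\bf y})} = \pi'_{\Psi({\bf x})} \circ \pi'_{\Psi({\bf y})}$ reduces to three observations: parity is additive mod $2$, giving $\tau_{{\bf x} + \pi_{\bf x}({\bf y})} = \tau_{\bf x} \circ \tau_{\pi_{\bf x}({\bf y})}$; the equivariance $\tilde{\pi}_{\bf x} \circ \tau_{\bf y} = \tau_{\pi_{\bf x}({\bf y})} \circ \tilde{\pi}_{\bf x}$ holds because $\tau_{\bf y}$ depends only on the parity vector of ${\bf y}$; and $\pi_{{\bf x} \star {\bf y}} = \pi_{\bf x} \circ \pi_{\bf y}$ in $(C, \star)$.

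It remains to check the fixed-coordinate clause of Definition \ref{FullPropDef}. The condition $\Psi(x) = \alpha \cdot {\bf 1}$ forces the Gray part $m \Phi_2(b)$ to be diagonal, i.e., $b \in \{0, 2\}$, which is equivalent to $x$ being even in $\Z_{4m}$. Hence $\Psi({\bf x}) = \alpha {\bf 1}_{2n}$ iff ${\bf x} = \beta {\bf 1}_n$ with $\beta$ even; in this case $\pi_{\bf x} = Id_n$ and $\tau_{\bf x} = Id$, giving $\pi'_{\Psi({\bf x})} = Id_{2n}$ as required. Otherwise either $\beta$ is odd, in which case $\tilde{\pi}_{\bf x} = Id$ but $\tau_{\bf x}$ is a product of $n$ within-pair swaps (hence fixed-point-free), or ${\bf x}$ is nonconstant, in which case $\tilde{\pi}_{\bf x}$ is fixed-point-free (as $\pi_{\bf x}$ is) and the subsequent within-pair swaps in $\tau_{\bf x}$ cannot produce a fixed coordinate. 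Combined with the bijectivity of $\Psi$, this establishes $(C, \star) \cong (C', \star')$. The main obstacle is the additivity lemma; the oddness of $m$ is essential there, as it enables the CRT decomposition isolating the Gray-map defect inside the $\Z_4$ factor, where it collapses to a single transposition.
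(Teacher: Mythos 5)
Your proposal is correct and follows essentially the same route as the paper: the same twisted-additivity lemma $\Psi(x+y)=\Psi(x)+\tau_x(\Psi(y))$ with $\tau_x$ a within-pair swap governed by the parity of $x$, the same lift $\tau_{\bf x}\circ\tilde{\pi}_{\bf x}$ of the coordinate permutations, and the same fixed-point analysis splitting the constant codewords by parity. The only differences are cosmetic: you derive the key lemma from the CRT splitting $\Z_{4m}\cong\Z_4\times\Z_m$ rather than the paper's direct enumeration over the pairs $(b,d)\in\Z_4\times\Z_4$, and you verify condition (ii) of Definition \ref{PropDef} for the transported structure explicitly, which the paper leaves implicit.
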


\begin{proof}
First observe that $\Psi$ is a bijection from $C$ to $C'$, so we need to determine the group of permutations for $C'$ and show that $\Psi : (C,\star) \rightarrow (C',\star')$ is a homomorphism.  We start with the $n=1$ case, so we just need to show that we can choose $\rho_{x} \in \mathcal{S}_{2}$ for each $x \in \Z_{4m}$ so that $\Psi(x) + \rho_{x}(\Psi(y)) = \Psi(x+y)$ for all $y$. We will see that $\rho_{x} = (1,2)^x$, i.e., $\rho_{x}$ permutes the two coordinates of a word in $\Z_{2m}^{2}$ or not, according to the parity of $x$. We adhere to the notation of the proof of Lemma \ref{Psi decomp}. Fix $x = 4a+mb$ and let $y = 4c+md$ where $0 \leq b,d \leq 3$, so $x+y = 4(a+c) + m(b+d)$ with the value of $b+d$ taken modulo $4$. A complete proof requires a verification that $\Psi(x) + \rho_{x}(\Psi(y)) = \Psi(x+y)$ for each pair $(b,d) \in \Z_4$, but for brevity we take $(b,d) = (3,1)$ as an example and leave the rest to the reader. Observe that 
\begin{align*}
\Psi(x) &= [2a,2a]+m\Phi(3) = [2a,2a]+m([0,1] + [1,1]) = [2a+m,2a],\\
\Psi(y) &= [2c,2c]+m\Phi(1) = [2c,2c]+m([0,1] + [0,0]) = [2c,2c+m],\\
\Psi(x+y) &= [2(a+c),2(a+c)] + m\Phi(0) = [2(a+c),2(a+c)].
\end{align*}
Since $b=3$, $x$ is odd, and so $\rho_{x} = (1,2)$. It follows that $\Psi(x)+\rho_{x}(\Psi(y)) = \Psi(x+y)$. This verifies the $1$-dimensional case.

Now suppose that $C$ is full propelinear of length $n$, and let ${\bf x},{\bf y} \in C$, with ${\bf x} \star {\bf y} = {\bf x} + \pi_{\bf x}({\bf y})$. Let $\pi_{\Phi({\bf x})} \in \mathcal{S}_{2n}$ permute the $n$ blocks of size $2$, labelled $b_{1},\ldots,b_{n}$, according to the action of $\pi_{{\bf x}}$ on a word of length $n$. That is, $\pi_{\Phi({\bf x})}(b_{i}) = b_{j}$ if and only if $\pi_{\bf x}(i) = j$. Then $\pi_{\Phi({\bf x})}(\Psi({\bf y})) = \Psi(\pi_{\bf x}({\bf y}))$. 
Further, let $\rho_{i} = (2i-1,2i)$ be the permutation swapping the entries of the block $b_{i}$, and write $\rho_{{\bf x}} = \prod_{i=1}^{n}\rho_{i}^{x_{i}}$. It follows that $\Psi({\bf x})\star'\Psi({\bf y}) := \Psi({\bf x}) + \rho_{{\bf x}}\pi_{\Phi({\bf x})}(\Psi({\bf y})) = \Psi({\bf x}+\pi_{{\bf x}}({\bf y})) = \Psi({\bf x} \star {\bf y})$. Thus $\Psi$ is a bijective homomorphism from $(C,\star)$ to $(C',\star')$.

It remains to verify that the permutation $\rho_{{\bf x}}\pi_{\Phi({\bf x})} = Id_{\mathcal{S}_{2n}}$ whenever $\Psi({\bf x}) = \alpha{\bf 1}_{2n}$ for any $\alpha \in \Z_{2m}$, and has no fixed coordinate otherwise. Let $S = C \cap \{\alpha {\bf 1}_{n} \; : \; 0 \leq \alpha \leq 4m-1 \}$ and let $X \subset S$ be the subset $X = C \cap \{2\alpha {\bf 1}_{n} \; : \; 0 \leq \alpha \leq 2m-1 \}$. Note first that $\Psi(X)$ is the set $X' = C' \cap \{\alpha {\bf 1}_{2n} \; : \; 0 \leq \alpha \leq 2m-1 \}$. It is clear that $\rho_{{\bf x}}\pi_{\Phi({\bf x})} = Id_{\mathcal{S}_{2n}}$ for all ${\bf x} \in X$. Further, for any ${\bf s} \in S \setminus X$, $\rho_{{\bf s}} = (1,2)(3,4)\cdots (2n-1,2n)$, and so does not fix any coordinate. Finally, for any codeword ${\bf c} \in C \setminus S$, $\pi_{{\bf c}}$ does not fix any coordinate of $\Z_{4m}^{n}$, and it follows that $\pi_{\Phi({\bf c})}$ does not fix any coordinate of $\Z_{2m}^{2n}$.
\end{proof}
\begin{corollary}\label{GrayGroupCor}
Let $m$ be an odd positive integer, and let $H \in \mathrm{BH}(n,4m)$. If the BH-code $C$ obtained from $H$ is full propelinear with group structure $G$, then the BH-code $C'$ obtained from $H^{\Psi}$ where $\Psi$ is full propelinear with group structure $G' \cong G$.
\end{corollary}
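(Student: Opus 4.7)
The plan is to deduce this directly from Theorem \ref{GrayGroup} applied to $C = C_H$. Since $C_H$ is full propelinear over $\Z_{4m}^n$ by hypothesis, that theorem gives that $\Psi(C_H)$ is full propelinear with group structure isomorphic to $G$. What remains is to identify $\Psi(C_H)$ with the BH-code $C_{H^\Psi}$ as subsets of $\Z_{2m}^{2n}$. A cardinality check reduces this to showing one inclusion, say $\Psi(C_H) \subseteq C_{H^\Psi}$: we have $|C_H|=4mn=|C_{H^\Psi}|$ ($n$ rows with $4m$ scalar translates versus $2n$ rows with $2m$ scalar translates), and $\Psi$ is injective.

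The main technical ingredient I would establish is the identity
\[
\Psi({\bf x}+\beta{\bf 1}_n)=\Psi({\bf x})+\gamma_\beta {\bf 1}_{2n}
\]
for every ${\bf x}\in\Z_{4m}^n$ and every \emph{even} $\beta\in\Z_{4m}$, where $\gamma_\beta \in \Z_{2m}$ depends only on $\beta$, and the map $\beta \mapsto \gamma_\beta$ bijects the even elements of $\Z_{4m}$ onto $\Z_{2m}$. This would follow from the one-dimensional formula $\Psi(x+\beta)=\Psi(x)+\rho_x(\Psi(\beta))$ used inside Theorem \ref{GrayGroup}: writing an even $\beta$ as $\beta=4a+mb$ forces $b\in\{0,2\}$ since $m$ is odd, and then $\Psi(\beta)=m\Phi(b)+2a{\bf 1}_2$ is a constant vector $\gamma_\beta{\bf 1}_2$, on which any transposition $\rho_x$ acts trivially.

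Finally, since $m$ is odd, every $\alpha\in\Z_{4m}$ is either even or of the form $m+\beta$ with $\beta$ even. In the even case one has $\Psi([L(H)]_i+\alpha{\bf 1}_n)=\Psi([L(H)]_i)+\gamma_\alpha{\bf 1}_{2n}\in C_{H^\Psi}$, and in the odd case the key identity applied to ${\bf x}=[L(H)]_i+m{\bf 1}_n$ yields $\Psi([L(H)]_i+\alpha{\bf 1}_n)=\Psi([L(H)]_i+m{\bf 1}_n)+\gamma_\beta{\bf 1}_{2n}$, which is also in $C_{H^\Psi}$ because the vectors $\Psi([L(H)]_i+m{\bf 1}_n)$ are precisely the second block of rows of $L(H^\Psi)$ built into the construction. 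This gives $\Psi(C_H)\subseteq C_{H^\Psi}$, and equality follows from the cardinality count. The only delicate point will be the bookkeeping in the key identity: the odd-parity scalar translates over $\Z_{4m}$ are exactly what motivates building the $L(H)+mJ$ block into $H^\Psi$, and the hypothesis that $m$ is odd is what makes the two parity classes in $\Z_{4m}$ map cleanly onto the two blocks of rows of $L(H^\Psi)$.
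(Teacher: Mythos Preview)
Your approach is correct and matches the paper's implicit reasoning: the corollary is stated immediately after Theorem~\ref{GrayGroup} with no proof, evidently taking the identification $\Psi(C_H)=C_{H^{\Psi}}$ for granted. Your argument supplies precisely this missing verification, and the bookkeeping you outline---splitting $\alpha\in\Z_{4m}$ by parity and matching the odd translates to the $L(H)+mJ$ block of $L(H^{\Psi})$---is accurate.
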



\begin{example}{\upshape
Let ${\cal H}^{3,0}$ be the BH-code associated to $F_4\otimes F_4\in \BH(16,4)$ and  $H^{3,0}$ be its image by the Gray map which is known to be a nonlinear code (see \cite[Table 1]{FVV19}). ${\cal H}^{3,0}$ is full propelinear, with permutation group $\Pi \cong \Z_{4}^{2}$ generated by $\pi_{\bf x}$ and $\pi_{\bf y}$ where
\[
\begin{array}{cc}
{\bf x} = [0,1,2,3,0,1,2,3,0,1,2,3,0,1,2,3],\\[2mm]
{\bf y} = [0,0,0,0,1,1,1,1,2,2,2,2,3,3,3,3],\\[2mm]
\pi_{\bf x} = (1,4,3,2)(5,8,7,6)(9,12,11,10)(13,16,15,14),\\[2mm]
\pi_{\bf y} = (1, 13,9,5)(2, 14,10,6)(3, 15,11,7)(4,16,12,8).
\end{array}
\]
The corresponding permutations $\rho_{\bf x}\pi_{\Phi({\bf x})}$ and $\rho_{\bf y}\pi_{\Phi({\bf y})}$ are as follows:
\[
\begin{array}{cc}
\rho_{\bf x}\pi_{\Phi({\bf x})} = (1,7,6,4)(2,8,5,3)(9,15,14,12)(10,16,13,11) & \\ (17,23,22,20)(18,24,21,19)(25,31,30,28)(26,32,29,27),\\[2mm]
\rho_{\bf y}\pi_{\Phi({\bf y})} = (1, 25,17,9)(2, 26,18,10)(3, 28,19,12)(4,27,20,11) & \\ 
(5,29,21,13)(6,30,22,14)(7,32,23,16)(8,31,24,15).
\end{array}
\]
Thus, $H^{3,0}$ can be endowed with a {\bf full propelinear structure} 
with the group $\langle \rho_{\bf x}\pi_{\Phi({\bf x})},\rho_{\bf y}\pi_{\Phi({\bf y})} \rangle$ of permutations, which is non-abelian of order $32$. This group contains the element $(\rho_{\bf x}\pi_{\Phi({\bf x})})(\rho_{\bf y}\pi_{\Phi({\bf y})})(\rho_{\bf x}\pi_{\Phi({\bf x})})^{-1}(\rho_{\bf y}\pi_{\Phi({\bf y})})^{-1} = \rho_{\bf 1}\pi_{\Phi({\bf 1})} = (1,2)(3,4)\cdots (31,32)$. The groups $(\mathcal{H}^{3,0},\star) \cong (H^{3,0},\star')$ are isomorphic to $\Z_{2}\times\Z_{4}\times\Z_{8}$. }
\end{example}

\begin{remark}{\upshape
Even though the codes $C$ and $C'$ are isomorphic as groups according to Theorem \ref{GrayGroup}, the example above shows that the underlying groups of coordinate permutations are not necessarily isomorphic. As a simpler example, take the trivial $1$-dimensional $\Z_{4}$ code and its image in $\Z_{2}^2$. Here, $\Psi : [0],[1],[2],[3] \mapsto [0,0],[0,1],[1,1],[1,0]$. Both are cyclic, generated by $[1]$ and $[0,1]$ respectively, but the group of coordinate permutations of $\Z_4$ is necessarily trivial, and the group of coordinate permutations of the image is generated by $\rho_{[1]}\pi_{[0,1]} = (1,2)$. More generally, if $C$ is a BHFP-code obtained from a $\mathrm{BH}(n,4m)$ with group $\Pi$ of coordinate permutations then by Definition \ref{FullPropDef}, $|\Pi| = n$, and the group of coordinate permutations for $\Psi(C)$ will be of order $|\Pi'| = 2n$.}
\end{remark}
\section*{Acknowledgements}
The authors would also like to thank  Kristeen Cheng for her reading of this manuscript.
The first author was supported by 
the project FQM-016 funded by JJAA (Spain). The second author was supported by the Spanish grant TIN2016-77918-P (AEI/FEDER, UE). The third author was supported by the Irish Research Council (Government of Ireland Postdoctoral Fellowship, GOIPD/2018/304).

\newpage


\end{document}